\newtheorem{theorem}{Theorem}
\newtheorem{definition}[theorem]{Definition}
\newtheorem{lemma}[theorem]{Lemma}
\newtheorem{corollary}[theorem]{Corollary}
\newtheorem{myclaim}{Claim}
\newtheorem{observation}{Observation}[section]
\newcommand{\R}{\mathbb{R}}
\newcommand{\Rp}{\mathbb{R}_{>0}}
\newcommand{\Zp}{\mathbb{Z}_{>0}}
\newcommand{\setN}{\mathbb{N}}
\newcommand{\I}{\mathcal{I}}
\newcommand{\T}{\mathcal{T}}
\newcommand{\Ce}{C_{\varepsilon}}
\newcommand{\xo}{x^*}
\newcommand{\x}{\bar{x}}
\newcommand{\KK}{\mathcal{K}}
\newcommand{\mink}{\textsc{min-K}}
\newcommand{\maxk}{\textsc{max-K}}
\newcommand{\rmink}{\textsc{MinLP}}
\newcommand{\ik}{\textsc{IK}}
\newcommand{\iik}{\textsc{IIK}}
\newcommand{\ufp}{\textsc{UFP}}
\newcommand{\conv}[1]{\textsc{conv}(#1)}
\newcommand{\qedd}{\hfill $\blacksquare$}
\DeclarePairedDelimiter\floor{\lfloor}{\rfloor}
\title{A PTAS for the \\ Time-Invariant Incremental Knapsack problem}
\author{Yuri Faenza\thanks{IEOR, Columbia University, USA. Email: yf2414@columbia.edu} \and Igor Malinovi\'c\thanks{DISOPT, EPFL, Switzerland. Email: igor.malinovic@epfl.ch}}
\date{\today}
\begin{document}
	\maketitle

	
	\begin{abstract}
		
		The Time-Invariant Incremental Knapsack problem (\iik) is a generalization of Maximum Knapsack to a discrete multi-period setting. At each time, capacity increases and items can be added, but not removed from the knapsack. The goal is to maximize the sum of profits over all times. \iik \ models various applications including specific financial markets and governmental decision processes. \iik \ is strongly NP-hard \cite{Bienstock13} and there has been work \cite{Bienstock13,iik17,Hartline08,Sharp07,Ye16} on giving approximation algorithms for some special cases. In this paper, we settle the complexity of \iik \ by designing a PTAS based on rounding a disjuncive formulation, and provide several extensions of the technique.
	
		
	\bigskip

\noindent \textbf{Keywords:} approximation algorithms, disjunctive programming, linear programming relaxations, time-invariant incremental knapsack   
			
	\end{abstract}
	
	\section{Introduction}
	
	Knapsack problems are among the most fundamental and well-studied in discrete optimization. Some variants forego the development of modern optimization theory, dating back to 1896 \cite{Mathews1896}. The best known representative is arguably \emph{Maximum Knapsack} (\maxk): given a set of items with specified profits and weights, and a threshold, find a most profitable subset of items whose total weight does not exceed the threshold. \maxk \ is NP-complete \cite{NPComplete}, while admitting a  \textit{fully polynomial-time approximation scheme} (FPTAS) \cite{Ibarra}. Many classical algorithmic techniques including greedy, dynamic programming, backtracking/branch-and-bound have been studied by means of solving this problem, see e.g. \cite{knap-book}. The algorithm of Martello and Toth \cite{Martello} has been known to be the fastest in practice for exactly solving knapsack instances \cite{KnapTest}.
	
		In order to model scenarios arising in real-world applications, more complex knapsack problems have been introduced (see \cite{knap-book} for a survey) and recent works studied extensions of classical combinatorial optimization problems to multi-period settings, see e.g. \cite{Hartline08,Sharp07,Skutella08}. 
		%
		At the intersection of those two streams of research, Bienstock et al. \cite{Bienstock13} proposed a generalization of a \maxk \ to a multi-period setting that they dubbed \emph{Time-Invariant Incremental Knapsack} (\iik). In \iik, we are given a set of items $[n]$ with profits $p : [n] \rightarrow \Rp$ and weights $w : [n] \rightarrow \Rp$ and a knapsack with non decreasing capacity $b_t$ over time $t \in [T]$. We can add items at each time as long as the capacity constraint is not violated, and once inserted, an item cannot be removed from the knapsack. The goal is to maximize the total profit,  which is defined to be the sum, over $t \in [T]$, of profits of items in the knapsack at time $t$. 
		
		\iik \ 
		models a scenario where available resources (e.g. money, labour force) augment over time in a predictable way, allowing to grow our portfolio. Take e.g. a bond market with an extremely low level of volatility, where all coupons render profit only at their common maturity time $T$ (\emph{zero-coupon} bonds) and an increasing budget over time that allows buying more and more (differently sized and priced) packages of those bonds. For  variations of \maxk \ that have been  used to model financial problems, see \cite{knap-book}. A different application arises in government-type decision processes, where items are assets of public utility (schools, parks, etc.) that can be built at a given cost and give a yearly benefit (both constant over the years), and the community will profit each year those assets are available. %
		
\medskip 
	
	\noindent \textbf{Previous work on \iik.} 	Although the first publication on \iik \ appeared just very recently \cite{iik17}, it was previously studied in \cite{Bienstock13} and several PhD theses \cite{Hartline08,Sharp07,Ye16}. Here we summarize all those results. In \cite{Bienstock13}, \iik \ is shown to be strongly NP-hard and an instance showing that the natural LP relaxation has unbounded integrality gap is provided. In the same paper, a PTAS is designed for $T=O(\log n)$. This improves over \cite{Sharp07}, where a PTAS for the special case $p=w$ is given when $T$ is a constant. Again when $p=w$, a 1/2-approximation algorithm for generic $T$ is provided in \cite{Hartline08}. Results from \cite{Ye16} can be adapted to give an algorithm that solves \iik \ in time polynomial in $n$ and of order $(\log T)^{O(\log T)}$ for a fixed approximation guarantee $\varepsilon$ \cite{personal-jay}. The authors in \cite{iik17} provide an alternative PTAS for \iik\ with constant $T$, and a $1/2$-approximation for arbitrary $T$ with under the assumption that every item alone fits into the knapsack at $t=1$.  

	\medskip 
	
	\noindent \textbf{Our contributions.} In this paper, we give an algorithm for computing a $(1-\varepsilon)$-approximated solution for \iik \ that depends polynomially on the number $n$ of items and, for any fixed $\varepsilon$, also polynomially on the number of times $T$. In particular, our algorithm provides a PTAS for \iik, regardless of $T$.

	\begin{theorem}\label{thr:IIK}
		There exists an algorithm that, when given as input $\varepsilon \in \R_{>0}$ and an instance ${\cal I}$ of \iik \ with $n$ items and $T\geq 2$ times, produces a $(1-\varepsilon)$-approximation to the optimal solution of ${\cal I}$ in time $O(T^{h(\varepsilon)} \cdot n f_{LP}(n))$. Here $f_{LP}(m)$ is the time required to solve a linear program with $O(m)$ variables and constraints, and $h: \R_{>0} \rightarrow \R_{\geq 1}$ is a function depending on $\varepsilon$ only. In particular, there exists a PTAS for \iik.
	\end{theorem}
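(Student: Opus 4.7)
The plan is to adapt the classical large/small item decomposition used in the knapsack PTAS to the incremental setting, combining (i) a geometric discretization of the time horizon that tames the dependence on $T$, (ii) exhaustive enumeration over the large-item schedule, and (iii) an LP relaxation with rounding for the small items.

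First, since the capacities $b_1 \leq \cdots \leq b_T$ are nondecreasing, I would identify $J = O(\frac{\log T}{\varepsilon})$ \emph{checkpoint} times $t_1 < t_2 < \cdots < t_J$ whose capacities form a geometric progression with ratio roughly $1+\varepsilon$ (a short initial segment of very small capacities can be handled separately, which is why the macros $\J$ and $\K$ carry additional $1/\varepsilon$ terms). A short exchange argument then shows that restricting any feasible solution to insert items only at checkpoints loses at most a $(1-\varepsilon)$ factor: shifting an item's insertion to the next checkpoint cuts its per-time profit contribution by a factor of at most $1+\varepsilon$. Second, after rescaling so $b_T=1$, I would call an item \emph{large} if its weight is at least $\varepsilon$ and \emph{small} otherwise; any feasible solution uses at most $1/\varepsilon$ large items. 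I would then guess, for each large item, the checkpoint at which it enters the knapsack (or ``never''), giving at most $(J+1)^{1/\varepsilon} \leq T^{h(\varepsilon)}$ configurations for a function $h$ depending only on $\varepsilon$.

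For each enumerated large-item schedule, I would write a compact LP on the checkpoint-compressed time axis with variables $y_{i,k}\in[0,1]$ for each small item $i$ and checkpoint $k$, enforcing monotonicity $y_{i,k}\leq y_{i,k+1}$, residual capacity at each $t_k$, and maximizing a total-profit objective weighted by bracket lengths. This LP has $O(n)$ variables and constraints once the large schedule is fixed, so it fits in the budget $n\,f_{LP}(n)$. I would then argue that rounding it---assigning each small item to its earliest checkpoint carrying enough fractional mass---yields an integral solution of value at least $(1-\varepsilon)$ times the LP optimum. Taking the best configuration over all guesses, and combining the multiplicative losses from checkpointing, item classification, and LP rounding (via a standard rescaling of $\varepsilon$), produces the promised $(1-\varepsilon)$-approximation within the claimed running time.

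The main obstacle will be the rounding of the small-item LP: the monotonicity constraints couple the fractional solutions across all checkpoints, so a naive per-checkpoint rounding may violate either monotonicity or the residual capacity, while a naive per-item commitment may underfill the knapsack. I expect to resolve this by committing each small item to a single insertion checkpoint determined by its fractional trajectory, then exploiting the hypothesis $w_i<\varepsilon$ in Shmoys--Tardos style to absorb the integrality loss at each checkpoint into a $(1+\varepsilon)$ multiplicative slack. Proving that this single-commitment rule simultaneously preserves a $(1-\varepsilon)$ fraction of the LP value \emph{and} respects all residual capacities across every checkpoint is the technical crux of the argument.
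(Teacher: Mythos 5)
Your proposal has a genuine gap, and it sits exactly at the point you defer as the ``technical crux.'' The decomposition is along the wrong axis: you classify items as large/small by \emph{weight} (relative to $b_T$), whereas any charging argument for the rounding loss in \iik\ needs control of \emph{profits}. A weight-small item ($w_i<\varepsilon b_T$) can carry arbitrarily high profit, and if it ends up fractional in your LP and is dropped, you lose its profit over up to $T$ time steps, which is not bounded by $\varepsilon\cdot\mathrm{OPT}$. The Shmoys--Tardos-style absorption you invoke also does not apply: it needs $w_i\leq\varepsilon b_{t}$ at the checkpoint in question (or resource augmentation, which \iik 's hard capacities forbid), but at early checkpoints one can have $b_{t_k}\ll\varepsilon b_T$, so a single ``small'' item is a constant fraction of the capacity there, and early insertions dominate the objective since they contribute at every later time. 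This is precisely the obstacle the paper's construction is built to circumvent: it discretizes \emph{profits} into $O(\frac{1}{\varepsilon}\log\frac{T}{\varepsilon})$ classes, guesses a stairway recording the highest profit class present at each significant time together with counts $\rho(t,k)\in\{0,\dots,\lceil 1/\varepsilon\rceil+1\}$ for the window of $\Ce$ classes below it, and then proves (Claims 6--7 in the proof of Theorem 9) that every fractional component surviving the greedy rounding has profit at most $\varepsilon$ times profit already locked in by the guesses. Your framework has no analogous mechanism, and I do not see how to supply one without switching the classification from weight to profit. A secondary error: your checkpoints are ``times whose capacities form a geometric progression,'' but the $(1+\varepsilon)$ shifting argument you give is valid only for checkpoints geometric in the \emph{time remaining} $T-t+1$ (the paper's condition ($\varepsilon$2)); capacity-geometric checkpoints can number far more than $O(\frac{1}{\varepsilon}\log T)$ (e.g.\ $b_t=2^t$), and delaying an insertion to the next such checkpoint can cost an unbounded factor in $T-t+1$.

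There is also a counting error in the enumeration. Guessing ``for each large item, the checkpoint at which it enters (or never)'' is $(J+2)^{m}$ configurations where $m$ is the number of large items in the \emph{instance}, which can be $n$; to get a finite enumeration you must also guess the \emph{identities} of the at most $1/\varepsilon$ large items used, costing $n^{\Theta(1/\varepsilon)}$. That still yields polynomial time for fixed $\varepsilon$, but it breaks the claimed bound $O(T^{h(\varepsilon)}\cdot n f_{LP}(n))$, in which only a single linear factor of $n$ appears outside the LP solves; the paper achieves that factor by guessing only the single highest-profit item ever inserted (Observation 1, the ``$1$-in'' reduction), after which all remaining guessing is over the $T^{f(\varepsilon)}$ stairway/$\rho$ patterns (Theorem 8), independent of $n$. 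Your instinct that the monotonicity constraints make naive rounding dangerous is correct, but the paper's Algorithm 1 resolves this by a greedy water-filling that reproduces the fractional optimum's weights $w^T\bar{x}_t=w^Tx^*_t$ and profits exactly (Claim 5) and then takes $\lfloor\bar{x}\rfloor$, with the loss bounded via the stairway structure rather than via capacity slack.
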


	Theorem \ref{thr:IIK} dominates all previous results on \iik \ \cite{Bienstock13,iik17,Hartline08,Sharp07,Ye16}  and, due to the hardness results in \cite{Bienstock13}, settles the complexity of the problem. Interestingly, it is based on designing a disjunctive formulation -- a tool mostly common among integer programmers and practitioners\footnote{See Appendix \ref{app:disj-progr} for a discussion on disjunctive programming.} -- and then rounding the solution to its linear relaxation with a greedy-like algorithm. We see Theorem \ref{thr:IIK} as an important step towards the understanding of the complexity landscape of knapsack problems over time. Theorem \ref{thr:IIK} is proved in Section \ref{sec:IIK}: see the end of the current section for a sketch of the techniques we use and a detailed summary of Section \ref{sec:IIK}. In Section \ref{sec:ext}, we show some extensions of Theorem \ref{thr:IIK} to more general problems. 


\medskip

	\noindent {\bf{Related work on other knapsack problems.}} 
	\cite{Bienstock13} discusses the relation between \iik \ and the generalized assignment problem (GAP), highlighting the differences between those problems. In particular, there does not seem to be a direct way to apply to \iik \ the $(1-1/e-\varepsilon)$ approximation algorithm \cite{Fleischer06} for GAP. 
	Other generalizations of \maxk \ related to \iik, but whose current solving approaches do not seem to extend, are the multiple knapsack (MKP) and unsplittable flow on a path (\ufp) problems. In Appendix \ref{app:ufp} we discuss those problems in order to highlight the new ingredients introduced by our approach. 
	
\medskip

\noindent {\bf{The basic techniques.}} 
	In order to illustrate the ideas behind the proof of Theorem \ref{thr:IIK}, let us first recall one of the PTAS for the classical \maxk \ with capacity $\beta$, $n$ items, profit and weight vector $p$ and $w$ respectively. Recall the greedy algorithm for knapsack:
	\begin{enumerate}
		
		\item Sort items so that $\frac{p_{1}}{w_{1}} \geq \frac{p_{2}}{w_{2}} \geq \cdots
		\geq \frac{p_{n}}{w_{n}}$.
		\medskip
		\item Set $\x_i=1$ for $i=1,\dots,\bar \imath$, where $\bar \imath$ is the maximum integer s.t. $\sum\limits_{1\leq i \leq \bar \imath} w_i \leq \beta$.
		
	\end{enumerate}
	
	\noindent It is well-known that $p^T \x \geq p^T x^*-\max_{i\geq \bar \imath +1}p_i$, where $x^*$ is the optimal solution to the linear relaxation. A PTAS for \maxk \ can then be obtained as follows: ``guess'' a set $S_0$ of $\frac{1}{\varepsilon}$ items with $w(S_0)\leq \beta$ and consider the ``residual'' knapsack instance ${\cal I}$ obtained removing items in $S_0$ and items $\ell$ with $p_\ell > \min_{i \in S_0} p_i$, and setting the capacity to $\beta-w(S_0)$. Apply the greedy algorithm to ${\cal I}$ as to obtain solution $S$. Clearly $S_0 \cup S$ is a feasible solution to the original knapsack problem. The best solutions generated by all those guesses can be easily shown to be a $(1-\varepsilon)$-approximation to the original problem.
	
	Recall that \iik \ can be defined as follows.
	
		\begin{equation}\label{eq:IIK}
		\begin{array}{ll}
			\max & \quad \sum\limits_{t \in [T]} p^Tx_t \\
			\text{s.t.} & \quad  w^Tx_t \leq b_t \quad \forall t \in [T] \\
			\ & \quad x_{t} \leq x_{t+1} \quad \forall t \in [T-1] \\
			\ & \quad x_t \in \{0,1\}^n \quad \forall t \in [T].
		\end{array}
	\end{equation}
	
		By definition, $0 < b_t \leq b_{t+1}$ for $t \in [T-1]$. We also assume wlog that $1=p_1\geq p_2 \geq ... \geq p_n$.

	When trying to extend the PTAS above for \maxk \ to \iik, we face two problems. First, we have multiple times, and a standard guessing over all times will clearly be exponential in $T$. Second, when inserting an item into the knapsack at a specific time, we are clearly imposing this decision on all times that succeed it, and it is not clear a priori how to take this into account.

	We solve these issues by proposing an algorithm that, in a sense, still follows the general scheme of the greedy algorithm sketched above: after some preprocessing, guess items (and insertion times) that give high profit, and then fill the remaining capacity with an LP-driven integral solution. However, the way of achieving this is different from the PTAS above. In particular, some of the techniques we introduced are specific for \iik \ and not to be found in methods for solving non-incremental knapsack problems.
	 
	\medskip 
	\noindent \textbf{An overview of the algorithm}:
	
	\begin{enumerate}
	\item[(i)] \emph{Sparsification and other simplifying assumptions}. We first show that by losing at most a $2\varepsilon$ fraction of the profit, we can assume the following (see Section \ref{sec:reduction}): item $1$, which has the maximum profit, is inserted into the knapsack at some time; the capacity of the knapsack only increases and hence the insertion of items can only happen at $J=O(\frac{1}{\varepsilon}\log T)$ times (we call them \emph{significant}); and the profit of each item is either much smaller than $p_1=1$ or it takes one of  $K=O(\frac{1}{\varepsilon}\log \frac{T}{\varepsilon})$ possible values (we call them \emph{profit classes}).
	
\medskip
	 
	\item[(ii)]\emph{Guessing of a stairway}. The operations in the previous step give a $J \times K$ grid of ``significant times'' vs ``profit classes'' with $O(\frac{1}{\varepsilon^2}\log^2\frac{T}{\varepsilon})$ entries in total. One could think of the following strategy: for each entry $(j,k)$ of the grid, guess how many items of profit class $k$ are inserted in the knapsack at time $t_j$. However, those entries are still too many to perform guessing over all of them. Instead, we proceed as follows: we guess, for each significant time $t_j$, \emph{which is the class} $k$ of maximum profit that has an element in the knapsack at time $t_j$. Then, for profit class $k$ and carefully selected profit classes ``close'' to $k$, we either \emph{guess exactly how many items} are in the knapsack at time $t_j$ or if these are \emph{at least} $\frac{1}{\varepsilon}$. Each of the guesses leads to a natural IP. The optimal solution to one of the IPs is an optimal solution to our original problem. Clearly, the number of possible guesses affects the number of the IPs, hence the overall complexity. We introduce the concept of ``stairway'' to show that these guesses are polynomially many for fixed $\epsilon$. See Section \ref{sec:iikdisj} for details. We remark that, from this step on, we substantially differ from the approach of \cite{Bienstock13}, which is also based on a disjunctive formulation. 
	
\medskip

	\item[(iii)] \emph{Solving the linear relaxations and rounding}. Fix an IP generated at the previous step, and let $x^*$ be the optimal solution of its linear relaxation. A classical rounding argument relies on LP solutions having a small number of fractional components. Unfortunately, $x^*$ is not as simple as that. However, we show that, after some massaging, we can control the entries of $x^*$ where ``most'' fractional components appear, and conclude that the profit of $\lfloor x^*\rfloor$ is close to that of $x^*$. See Section \ref{sec:round} for details. Hence, looping over all guessed IPs and outputting vector $\lfloor x^*\rfloor$ of maximum profit concludes the algorithm.
	\end{enumerate}
\bigskip

	
	
	
\noindent {\bf Assumption:} We assume that expressions $\frac{1}{\varepsilon}$, $(1+\varepsilon)^{j}$, $\ \log_{1+\varepsilon}{\frac{T}{\varepsilon}}$ and similar are to be rounded up to the closest integer. This is just done for simplicity of notation and can be achieved by replacing $\epsilon$ with an appropriate constant fraction of it, which will not affect the asymptotic running time.
	
	
\section{A PTAS for IIK}\label{sec:IIK}

	\subsection{Reducing IIK to special instances and solutions}\label{sec:reduction}
	
	Our first step will be to show that we can reduce \iik, without loss of generality, to solutions and instances with a special structure. The first reduction is immediate: we restrict to solutions where the highest profit item is inserted in the knapsack at some time. We call these \emph{$1$-in solutions}. This can be assumed by guessing which is the highest profit item that is inserted in the knapsack, and reducing to the instance where all higher profit items have been excluded. Since we have $n$ possible guesses, the running time is scaled by a factor $O(n)$.	
	\begin{observation}\label{obs:1-in}
		Suppose there exists a function $f: \setN \times \setN \times \R_{> 0}$ such that, for each $n,T \in \setN$, $\varepsilon> 0$, and any instance of \iik \ with $n$ items and $T$ times, we can find a $(1 - \varepsilon)$-approximation to a $1$-in solution of highest profit in time $f(n,T,\varepsilon)$. Then we can find a $(1-\varepsilon)$-approximation to any instance of \iik \ with $n$ items and $T$ times in time $O(n) \cdot f(n,T,\varepsilon)$.
	\end{observation}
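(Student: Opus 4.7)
The plan is to enumerate over the identity of the highest-profit item used in an optimal solution. For each $i \in [n]$, construct a reduced instance $\mathcal{I}_i$ of \iik\ by deleting from $\mathcal{I}$ every item $\ell$ with $p_\ell > p_i$ while keeping the capacities $b_t$ unchanged; then apply the algorithm hypothesized in the statement to $\mathcal{I}_i$, obtaining a $(1-\varepsilon)$-approximation $\tilde x^{(i)}$ to a $1$-in solution of $\mathcal{I}_i$ of highest profit. The output is the most profitable solution among $\tilde x^{(1)}, \dots, \tilde x^{(n)}$, together with the empty solution, which we include as a candidate to cover the degenerate case in which the optimum inserts no item.

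Correctness reduces to a ``the guess is good'' argument. Fix an optimum $x^\circ$ of $\mathcal{I}$ and let $i^\circ$ be the index of a maximum-profit item that $x^\circ$ inserts at some time (if no such item exists, the empty candidate already attains $\opt{\mathcal{I}} = 0$). Since $x^\circ$ uses no item with profit larger than $p_{i^\circ}$, it remains feasible in $\mathcal{I}_{i^\circ}$; since $i^\circ$ has maximum profit in $\mathcal{I}_{i^\circ}$ and is inserted by $x^\circ$, it is in fact a $1$-in solution of $\mathcal{I}_{i^\circ}$. Therefore the best $1$-in solution of $\mathcal{I}_{i^\circ}$ has profit at least $\opt{\mathcal{I}}$, so the candidate $\tilde x^{(i^\circ)}$ has profit at least $(1-\varepsilon)\opt{\mathcal{I}}$. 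Each $\tilde x^{(i)}$ is feasible in $\mathcal{I}$, since the feasible region of $\mathcal{I}_i$ is contained in that of $\mathcal{I}$, so the best candidate is a valid $(1-\varepsilon)$-approximation for $\mathcal{I}$.

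The running time bound follows immediately: we issue at most $n$ calls to the hypothesized algorithm on instances with at most $n$ items and the same $T$ times, plus an $O(n)$ comparison step, giving the claimed $O(n) \cdot f(n,T,\varepsilon)$ bound. I do not expect a genuine obstacle here; the proof is essentially bookkeeping. The only small care points are handling the empty-optimum case explicitly and noting that ties among maximum-profit items of $\mathcal{I}_{i^\circ}$ are harmless, because being a $1$-in solution only requires that \emph{some} maximum-profit item be inserted.
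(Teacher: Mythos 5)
Your proposal is correct and matches the paper's intended argument exactly: the paper justifies Observation \ref{obs:1-in} with precisely this guessing scheme (for each candidate highest-profit item, delete all items of strictly larger profit, invoke the hypothesized algorithm, and return the best of the $n$ candidates, scaling the running time by $O(n)$). Your added care about the empty optimum and ties among maximum-profit items is sound bookkeeping that the paper leaves implicit.
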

	
	Now, let $\cal I$ be an instance of \iik \ with $n$ items, let $\varepsilon > 0$. We say that $\cal I$ is \emph{$\varepsilon$-{well-behaved}} if it satisfies the following properties. 	
	\begin{enumerate}[($\varepsilon$1)]
		\item\label{profit-constraint} For all $i \in [n]$, one has $p_i = (1+\varepsilon)^{-j}$ for some $j \in \{0,1, \dots , \log_{1+\varepsilon} \frac{T}{\varepsilon} \} $, or $p_i \leq \frac{\varepsilon}{T}$.
		\item\label{time-constraint} $b_t=b_{t-1}$ for all $t \in [T]$ such that $(1+\varepsilon)^{j-1}<T-t+1<(1+\varepsilon)^{j}$ for some \\ $j \in \{0,1, \dots , \log_{1+\varepsilon} T \}$, where we set $b_0=0$.
	\end{enumerate}
	\begin{figure}[ht] 
	\center
	\includegraphics[scale=0.8]{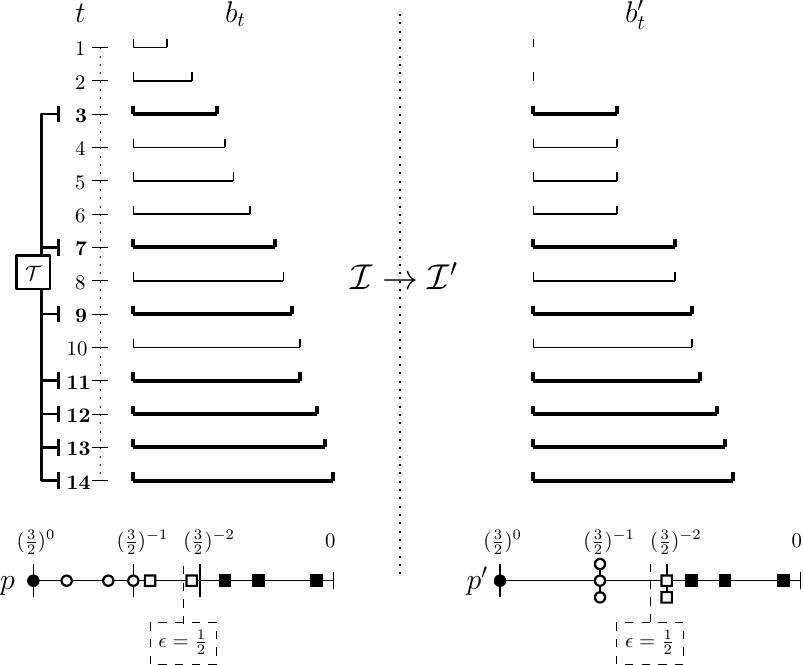}
	\caption{An example of obtaining an $\varepsilon$-well-behaved instance for $\varepsilon=\frac{1}{2}$ and $T=14$.}
	\label{fig:well-behaved}
\end{figure}

	See Figure \ref{fig:well-behaved} for an example. 
	Note that condition $(\varepsilon$\ref{time-constraint}$)$ implies that the capacity can change only during the set of times $ \T := \{ t \in [T]: t=T+1-{ (1 + \varepsilon )^j } \text{ for some } j \in \setN \}$, with $|\T|=O(\log_{1+\varepsilon} T)$. $\T$ clearly gets sparser as $t$ becames smaller. Note that for $T$ not being a degree of $(1 + \varepsilon)$ there will be a small fraction of times $t$ at the beginning with capacity $0$; see Figure \ref{fig:well-behaved}. 

	Next theorem implies that we can, wlog, assume that our instances are $\varepsilon$-well-behaved (and our solutions are $1$-in). 	
	
	\begin{theorem}\label{thr:well-behaved}
		Suppose there exists a function $g: \setN \times \setN \times \R_{> 0}$ such that, for each $n,T \in \setN$, $\varepsilon> 0$, and any $\varepsilon$-well-behaved instance of \iik \ with $n$ items and $T$ times, we can find a $(1 - 2\varepsilon)$-approximation to a $1$-in solution of highest profit in time $g(n,T,\varepsilon)$. Then we can find a $(1-4\varepsilon)$-approximation to any instance of \iik \ with $n$ items and $T$ times in time $O(T+n(n+g(n,T,\varepsilon))$. 
	\end{theorem}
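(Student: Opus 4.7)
My plan is to construct, from any instance $\mathcal{I}$, an $\varepsilon$-well-behaved instance $\mathcal{I}'$ by two independent rounding operations: (a) round each profit $p_i > \varepsilon/T$ down to the nearest value of the form $(1+\varepsilon)^{-j}$ and leave $p_i \leq \varepsilon/T$ unchanged, so that $p'_i \geq p_i/(1+\varepsilon)$ for every $i$ and $(\varepsilon$\ref{profit-constraint}$)$ holds; (b) for every $t \in [T]$, set $b'_t := b_{s(t)}$, where $s(t)$ is the largest element of $\T$ with $s(t) \leq t$, and $b'_t := 0$ if no such $s(t)$ exists. Then $(\varepsilon$\ref{time-constraint}$)$ holds, $b'$ is non-decreasing, and $b'_t \leq b_t$ throughout. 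Since the capacity rounding depends only on $T,\varepsilon$ and $b$, I precompute $b'$ once in time $O(T)$, independently of any subsequent item deletion.

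The main content is the approximation bound: the highest-profit $1$-in value in $\mathcal{I}'$ is at least $1/(1+\varepsilon)^2$ times that in $\mathcal{I}$. Given an optimal $1$-in solution $x^*$ of $\mathcal{I}$, I define $\tilde x_t := x^*_{s(t)}$ (and $\tilde x_t := 0$ for $t$ before the first significant time). Because $s(\cdot)$ is non-decreasing and $x^*$ is, so is $\tilde x$; moreover $w^T \tilde x_t = w^T x^*_{s(t)} \leq b_{s(t)} = b'_t$, and $\tilde x$ inserts item $1$ at the smallest significant time at least as large as the $x^*$-insertion time of item $1$, so $\tilde x$ is a $1$-in feasible solution of $\mathcal{I}'$. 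To bound its profit, I account per item: if item $i$ is first inserted at $t_i$ in $x^*$, it is first inserted at $s^+(t_i)$ in $\tilde x$. Using the estimate $\lceil (1+\varepsilon)^j \rceil - \lceil (1+\varepsilon)^{j-1}\rceil - 1 \leq \varepsilon (1+\varepsilon)^{j-1} \leq \varepsilon \lceil (1+\varepsilon)^{j-1}\rceil$, one gets $s^+(t_i)-t_i \leq \varepsilon(T-s^+(t_i)+1)$, i.e., the profit lost on item $i$ is at most $\varepsilon$ times its $\tilde x$-contribution. Summing, $p^T \tilde x \geq p^T x^*/(1+\varepsilon)$, and combining with $p' \geq p/(1+\varepsilon)$ componentwise yields $p'^T \tilde x \geq p^T x^*/(1+\varepsilon)^2$.

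The algorithm is then: precompute $b'$ in $O(T)$ time; invoke Observation \ref{obs:1-in}, so that for each $\ell \in [n]$ we form the sub-instance obtained by deleting items $1, \ldots, \ell-1$, round its profits in $O(n)$ time to obtain a well-behaved $\mathcal{I}'_\ell$ with capacities $b'$, and apply the assumed algorithm at cost $g(n,T,\varepsilon)$; return the best of the $n$ candidate solutions. Each candidate is feasible in $\mathcal{I}$ because $b' \leq b$, and by the previous paragraph it has profit at least $(1-2\varepsilon)/(1+\varepsilon)^2 \geq 1-4\varepsilon$ times the highest-profit $1$-in value of the corresponding sub-instance. Observation \ref{obs:1-in} then converts this into a $(1-4\varepsilon)$-approximation for $\mathcal{I}$ itself, at total running time $O(T) + n\cdot O(n + g(n,T,\varepsilon)) = O(T + n(n+g(n,T,\varepsilon)))$.

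The main obstacle is the per-item accounting in step (b): the delay $s^+(t_i) - t_i$ must be charged against item $i$'s contribution in $\tilde x$ (rather than its contribution in $x^*$), so that the losses aggregate multiplicatively rather than additively (which would give an $n$-factor blow-up). Everything else---preservation of feasibility, monotonicity and the $1$-in property; the arithmetic inequality $(1-2\varepsilon)/(1+\varepsilon)^2 \geq 1-4\varepsilon$ for small $\varepsilon$; and the time bookkeeping---follows by direct computation.
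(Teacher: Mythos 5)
Your proposal is correct and follows essentially the same route as the paper: you build the same well-behaved instance $\mathcal{I}'$ (profits rounded down to powers of $(1+\varepsilon)^{-1}$, capacities set to $b_{s(t)}$ at the nearest significant time below, precomputed once in $O(T)$ before the guessing loop of Observation \ref{obs:1-in}), and your shifted solution $\tilde x_t = x^*_{s(t)}$ is exactly the paper's $x'$ in Claim \ref{claim:eps-well-behaved-inside}. Your additive delay bound $s^+(t_i)-t_i \leq \varepsilon\,(T-s^+(t_i)+1)$ is just a restatement of the paper's per-item inequality $T-t_{i,\min}(x')+1 \geq \frac{1}{1+\varepsilon}\left(T-t_{i,\min}(x^*)+1\right)$, and the remaining bookkeeping (feasibility via $b' \leq b$ and $p \geq p'$, the factor $(1-2\varepsilon)/(1+\varepsilon)^2 \geq 1-4\varepsilon$, and the $O(T+n(n+g(n,T,\varepsilon)))$ running time) matches the paper's argument.
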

	
	Fix an \iik \ instance ${\cal I}$. The reason why we can restrict ourselves to finding a $1$-in solution is Observation \ref{obs:1-in}. Denote with ${\cal I'}$ the instance  with $n$ items having the same weights as in ${\cal I}$, $T$ times, and the other parameters defined as follows:
		\begin{itemize}
			\item     
			For $i \in [n]$, if $(1+\varepsilon)^{-j} \leq {p_i}< (1+\varepsilon)^{-j+1}$ for some $j \in \{0,1, \dots , \log_{1+\varepsilon} \frac{T}{\varepsilon} \} $, set $p_{i}' :=(1+\varepsilon)^{-j}$; otherwise, set $p_i':={p_i}$ . Note that we have $1=p'_1\geq p'_2 \geq ... \geq p'_{n}$.
			\item
			For $t \in [T]$ and  $(1+\varepsilon)^{j-1} < T-t+1 \leq (1+\varepsilon)^{j} $
			for some $j \in \{0,1, \dots , \log_{1+\varepsilon} T \}$, set $b_t':=b_{T-(1+\varepsilon)^j+1}$, with $b_0':=0$.
		\end{itemize}

		One easily verifies that $\I'$ is $\varepsilon$-well-behaved. Moreover, $b'_t\leq b_t$ for all $t\in [T]$ and $\frac{p_i}{1+\varepsilon} \leq p'_i\leq p_i$ for $i \in [n]$, so we deduce: \begin{myclaim}\label{obs:eps-well-behaved-inside}
			Any solution $\bar x$ feasible for $\I'$ is also feasible for $\I$, and $p(\bar x)\geq p'(\bar x)$. 
		\end{myclaim}
		%
		\noindent We also prove the following.
		
		\begin{myclaim}\label{claim:eps-well-behaved-inside}
			Let $x^*$ be a $1$-in feasible solution of highest profit for $\I$. There exists a $1$-in feasible solution $x'$ for $\I'$ such that $p'(x') \geq (1-\varepsilon)^2 p(x^*)$. 
		\end{myclaim}
		
		\begin{proof}

Define $x' \in \{0,1\}^{Tn}$ as follows:
		$$\begin{array}{lll}
		x_t' := x^*_{T-(1+\varepsilon)^j+1} \quad \hbox{ if } (1+\varepsilon)^{j-1} < T- t +1  \leq (1+\varepsilon)^{j} \\ \qquad \hbox{ for } j \in \{0,1, \dots , \log_{1+\varepsilon} T \}, \hbox{with } x^*_0=0. 
		\end{array}$$
		
		In order to prove the claim we first show that $x'$ is a feasible $1$-in solution for $\I'$. Indeed, it is $1$-in, since by construction $x'_{T,1}=x^*_{T,1}=1$. It is feasible, since 
		for $t$ such that $(1+\varepsilon)^{j-1} < T- t +1  \leq (1+\varepsilon)^{j}, \ j \in \{0,1, \dots , \log_{1+\varepsilon} T \}$ we have $$w^T x'_t=w^T x^*_{T-(1+\varepsilon)^j+1}\leq b_{T-(1+\varepsilon)^j+1} = b'_t.$$ 
		Comparing $p'(x')$ and $p(x^*)$ gives 
		$$\begin{array}{lllllll} 
		p'(x')  & \geq & \sum\limits_{t \in [T]} \sum\limits_{i \in [n]} p_i' x_{t,i}' & = &  \sum\limits_{i \in [n]} (T-t_{i, \min}(x') + 1) p_i' \\
		& \geq & \sum\limits_{i \in [n]} \frac{1}{1 + \varepsilon} (T-t_{i, \min}(x^*) + 1) p_i' & \geq & \sum\limits_{i \in [n]} \frac{1}{(1 + \varepsilon)^2} (T-t_{i, \min}(x^*) + 1) p_i \\
		& = & (\frac{1}{1 + \varepsilon})^2 p(x^*) & \geq & (1 - \varepsilon)^2 p(x^*),
		\end{array}$$
		\noindent where $t_{i, \min}(v):= \min\{t \in [T] : \ v_{t,i} = 1 \}$ for $v \in \{0,1\}^{Tn}$. 
		\end{proof}

\noindent	\textbf{Proof of Theorem \ref{thr:well-behaved}.} Let $\hat x$ be a $1$-in solution of highest profit for ${\cal I}'$ and $\bar{x}$ is a solution to $\I'$ that is a $(1-\varepsilon)$-approximation to $\hat x$. Claim \ref{obs:eps-well-behaved-inside} and Claim \ref{claim:eps-well-behaved-inside} imply that $\bar x$ is feasible for ${\cal I}$ and we deduce:
		$$p(\bar{x}) \geq p'(\bar x) \geq (1-2\varepsilon) p'(\hat x) 
		{\geq} (1-2\varepsilon) p'(x') \geq (1-2\varepsilon)(1-\varepsilon)^2 p(x^*) \geq (1 - 4 \varepsilon) p(x^*).$$

		In order to compute the running time, it is enough to bound the time required to produce ${\cal I}'$. Vector $p'$ can be produced in time $O(n)$, while vector $b'$ in time $T$. Moreover, the construction of the latter can be performed before fixing the highest profit object that belongs to the knapsack (see Observation \ref{obs:1-in}). The thesis follows.
		
		\qedd		
		

	\subsection{A disjunctive relaxation} \label{sec:iikdisj}
	
	Fix $\varepsilon>0$. Because of Theorem \ref{thr:well-behaved}, we can assume that the input instance $\cal I$ is $\varepsilon$-well-behaved. We call all times from $\T$ \emph{significant}. Note that a solution over the latter times can be naturally extended to a global solution by setting $x_t=x_{t-1}$ for all non-significant times $t$. 
	We denote significant times by $t(1) < t(2) < \dots < t(|\T|)$.	In this section, we describe an IP over feasible $1$-in solutions of an $\varepsilon$-well-behaved instance of \iik.
	The feasible region of this IP is the union of different regions, each corresponding to a partial assignment of items to significant times. 
	In Section \ref{sec:round} we give a strategy to round an optimal solution of the LP relaxation of the IP to a feasible integral solution with a $(1-2\varepsilon)$-approximation guarantee. Together with Theorem \ref{thr:well-behaved} (taking $\varepsilon'=\frac{\varepsilon}{4}$), this implies Theorem \ref{thr:IIK}.
	
	
	In order to describe those partial assignments, we introduce some additional notation. We say that items having profit $(1+\varepsilon)^{-k}$ for $k \in [ \log_{1+\varepsilon} \frac{T}{\varepsilon}] $, belong to \emph{profit class $k$}. Hence bigger profit classes correspond to items with smaller profit. All other items are said to belong to the \emph{small} profit class. Note that there are $O(\frac{1}{\varepsilon} \log \frac{T}{\varepsilon})$ profit classes (some of which could be empty).	Our partial assignments will be induced by special sets of vertices of a related graph called \emph{grid}.

	\begin{definition}
		Let $J \in \Zp, K \in \mathbb{Z}_{\geq 0}$, a grid of dimension $J \times (K+1)$ is the graph $G_{J,K}=([J] \times [K]_0,E)$, where 
		\begin{align*}
			E & :=\{\{u,v\}: \ u, v \in [J] \times [K]_0, \ u=(j, k) \\
			\ & \quad \quad \quad \quad \quad \ \text{ and either } v=(j+1, k) \text{ or } v=(j, k+1) \ \}.
		\end{align*}
	\end{definition}
	

	\begin{definition}\label{def:stairway}
		Given a grid $G_{J,K}$, we say that 
	$$S:=\{(j_1, k_1), (j_2, k_2), \dots , (j_{|S|}, k_{|S|})\}\subseteq  V(G_{J,K})$$
		 is a \emph{stairway} if $
		j_h > j_{h+1} \text{ and } k_h < k_{h+1}  \text{ for all } h \in [|S|-1]$.
		
	\end{definition}

	\begin{lemma}\label{lem:stairway}
		There are at most $2^{K+J+1}$ distinct stairways in $G_{J,K}$.
	\end{lemma}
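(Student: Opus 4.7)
The plan is to set up a bijection between stairways in $G_{J,K}$ and pairs $(J', K')$ with $J' \subseteq [J]$, $K' \subseteq [K]_0$, and $|J'|=|K'|$, and then count such pairs via Vandermonde. First I would observe that for any stairway $S = \{(j_1,k_1),\dots,(j_{|S|},k_{|S|})\}$, both the $j$-coordinates and the $k$-coordinates are all distinct (since $j_1 > j_2 > \cdots$ and $k_1 < k_2 < \cdots$ are strict monotone sequences). Thus, every stairway $S$ determines the pair of coordinate sets $J'(S) := \{j_h : h \in [|S|]\}$ and $K'(S) := \{k_h : h \in [|S|]\}$, which satisfy $|J'(S)| = |K'(S)| = |S|$.

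Next I would argue that $S \mapsto (J'(S), K'(S))$ is a bijection onto all such pairs. Injectivity follows from the stairway constraints themselves: the strict monotonicity forces the largest element of $J'(S)$ to be paired with the smallest element of $K'(S)$, the second largest with the second smallest, etc. Surjectivity is immediate from the same pairing rule: given any pair $(J', K')$ with $|J'| = |K'| = s$, if we list $J' = \{j_1 > \cdots > j_s\}$ in decreasing order and $K' = \{k_1 < \cdots < k_s\}$ in increasing order, then $S := \{(j_h, k_h) : h \in [s]\}$ is by construction a stairway with coordinate sets $J'$ and $K'$. (The empty stairway $s=0$ is recovered via $(J',K')=(\emptyset,\emptyset)$, consistent with the vacuous case of Definition \ref{def:stairway}.)

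Having the bijection, I would simply count: the number of stairways of size $s$ is $\binom{J}{s}\binom{K+1}{s}$, so the total number of stairways equals
\[
\sum_{s=0}^{\min(J,K+1)} \binom{J}{s}\binom{K+1}{s} \;=\; \binom{J+K+1}{J},
\]
where the equality is Vandermonde's identity applied to $\binom{K+1}{s}=\binom{K+1}{K+1-s}$. Finally, since $\binom{J+K+1}{J}$ is a single term of the binomial expansion of $2^{J+K+1}$, it is bounded by $2^{J+K+1}$, giving the claim.

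There is no real obstacle here; the argument is essentially combinatorial bookkeeping. The only point that requires care is the verification that the pairing rule is forced, which is the content of injectivity; everything else is a direct application of Vandermonde and a crude bound on a single binomial coefficient.
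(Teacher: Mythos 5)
Your proof is correct and follows essentially the same route as the paper: the paper's argument is exactly your injection of stairways into pairs $(J_1,K_1)$ with $J_1\subseteq[J]$, $K_1\subseteq[K]_0$, $|J_1|=|K_1|$ (the forced pairing of largest $j$ with smallest $k$ being the implicit reason the correspondence is one-to-one), after which it bounds the number of such pairs crudely by $2^J\cdot 2^{K+1}=2^{J+K+1}$. Your Vandermonde step giving the exact count $\binom{J+K+1}{J}$ is a correct but unnecessary refinement, since the lemma only needs the $2^{J+K+1}$ bound.
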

	
	\begin{proof}
	The first coordinate of any entry of a stairway can be chosen among $J$ values, the second coordinate from $K+1$ values. By Definiton \ref{def:stairway}, each stairway correspond to exactly one choice of sets $J_1\subseteq [J]$ for the first coordinates and $K_1\subseteq [K]_0$ for the second, with $|K_1|=|J_1|$.	
		
	\end{proof}

	
	
	
	
	Now consider the grid graph with $J:=|\T|=\theta(\frac{1}{\varepsilon} \log T)$, $K= \log_{1+\varepsilon} \frac{T}{\varepsilon}$, and a stairway $S$ with $k_1 = 0$. See Figure \ref{fig:stairway} for an example. This corresponds to a partial assignment that can be informally described as follows. Let $(j_h,k_h)\in S$ and $t_h := t(j_h)$. In the corresponding partial assignment no item belonging to profit classes $k_h \leq k < k_{h+1}$ is inside the knapsack at any time  $t < t_h$, while the first time an item from profit class $k_h$ is inserted into the knapsack is at time $t_h$ (if $j_{|S|}>1$ then the only items that the knapsack can contain at times $1,\dots,t_{|S|}-1$ are the items from the small profit class). Moreover, for each $h \in [|S|]$, we focus on the family of profit classes $\KK_h := \{ k \in [K]: \ k_h \leq k \leq k_h + \Ce\}$ with $\Ce={\log_{1+\varepsilon} \frac{1}{\varepsilon}}$. For each $k \in \KK_h$ and every (significant) time $t$ in the set $\T_h := \{ t \in \T: \ t_{h-1} < t \leq t_h  \}$, we will either specify exactly the number of items taken from profit class $k$ at time $t$, or impose that there are at least $\frac{1}{\varepsilon}+1$ of those items (this is established by map $\rho_h$ below). Note that we can assume that the items taken within a profit class are those with minimum weight: this may exclude some feasible $1$-in solutions, but it will always keep at least a feasible $1$-in solution of maximum profit. 
	No other constraint is imposed.
		
	More formally, set $k_{|S|+1}=K+1$ and for each $h=1,\dots, |S|$:
	\begin{enumerate}[i)]
		\item Set $x_{t,i}=0$ for all $t \in [t_h-1]$ and each item $i$ in a profit class $k \in [k_{h+1}-1]$.
		\item Fix a map $\rho_h: \T_h \times \KK_h \rightarrow \{0,1, \dots , {\frac{1}{\varepsilon}} +1 \}$ such that for all $t \in \T_h$ one has $\rho_h(t, k_h) \geq 1$ and $ \rho_h(\bar{t},k) \geq \rho_h(t,k), \ \forall (\bar{t},k) \in \T_h \times \KK_h, \ \bar{t} \geq t$. 
		
	\end{enumerate}
	
	

	
	Additionally, we require $\rho_h(\bar{t},k) \geq \rho_{h+1}(t,k)$ for all $h \in [|S|-1], \ k \in \KK_h \cap \KK_{h+1}, \ \bar{t} \in \T_{h}, \ t \in \T_{h+1}$. Thus, we can merge all $\rho_h$ into a function $\rho: \cup_{h \in [|S|]} (\T_h \times \KK_h) \rightarrow \{0,1, \dots , {\frac{1}{\varepsilon}} +1 \}$. For each profit class $k \in [K]$ we assume that items from this class are $I_k=\{1(k),\dots , |I_k|(k) \}$, so that $w_{1(k)} \leq w_{2(k)} \leq \dots \leq w_{{|I_k|}(k)}$. Based on our choice $(S, \rho)$ we define the polytope: 
$$\begin{array}{llll}
		P(S,\rho)=\{x \in \R^ {Tn}: &&  w^Tx_t \leq b_t & \forall t \in [T] \\
		&& x_{t} \leq x_{t+1} & \forall t \in [T-1] \\
		&& 0 \leq x_t \leq 1 & \forall t \in [T] \\
		& \forall h \in [|S|]:&  \\
		&& x_{t,i(k)}=0, & \forall t < t_h, \ \forall k < k_{h+1}, \ \forall i(k) \in I_k \\
		&&  x_{t,i(k)}=1, & \forall t \in \T_h, \ \forall k \in \KK_h, \ \forall i(k): i \leq \rho(t,k) \\
		&& x_{t,i(k)}=0,  & \forall t \in \T_h, \ \forall k \in \KK_h: \rho(t,k) \leq {\frac{1}{\varepsilon}}, \\
		&&& \forall i(k): i > \rho(t,k) \}. \ \
	\end{array}$$

\begin{figure}[h!t]
	\center
	\includegraphics[scale=1.0]{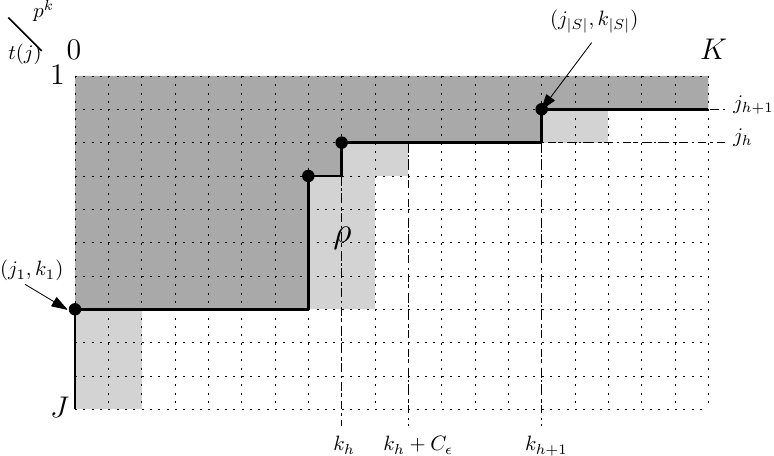}
	\caption{An example of a stairway $S$, given by thick black dots. Entries $(j,k)$ lying in the light grey area are those for which a value $\rho$ is specified. No item corresponding to the entries in the dark grey area is taken, except on the boundary in bold.}
	\label{fig:stairway}
\end{figure}	

	The linear inequalities are those from the IIK formulation. The first set of equations impose that, at each time $t$, we do not take any object from a profit class $k$, if we guessed that the highest profit object in the solution at time $t$ belongs to a profit class $k'>k$ (those are entries corresponding to the dark grey area in Figure \ref{fig:stairway}). The second set of equations impose that for each time $t$ and class $k$ for which a guess $\rho(t,k)$ was made (light grey area in Figure \ref{fig:stairway}), we take the $\rho(t,k)$ items of smallest weight. As mentioned above, this is done without loss of generality: since profits of objects from a given profit class are the same, we can assume that the optimal solution insert first those of smallest weight. The last set of equations imply that no other object of class $k$ is inserted in time $t$ if $\rho(t,k)\leq {\frac{1}{\varepsilon}}$.  
	
	Note that some choices of $S, \rho$ may lead to empty polytopes. Fix $S, \rho$, an item $i$ and some time  $t$. If, for some $t'\leq t$, $x_{t',i}=1$ explicitly appears in the definition of $P(S,\rho)$ above, then we say that $i$ is \emph{$t$-included}. Conversely, if $x_{\bar{t},i}=0$ explicitly appears for some $\bar{t} \geq t$, then we say that $i$ is \emph{$t$-excluded}.

	\begin{theorem}\label{thr:disjunction-iik}
		Any optimal solution of
		$$\max \sum_{t \in [T]}p_{t}^T x_{t} \; \; \hbox{ s.t. }\; \; \ x\in \left(\cup_{S,\rho} P(S, \rho) \right) \cap \{0,1\}^ {Tn} $$ 
		is a $1$-in solution of maximum profit for $\cal I$. Moreover, the the number of constraints of the associated LP relaxation is at most $nT^{f(\varepsilon)}$ for some function $f : \R_{>0}\rightarrow \R_{>0}$ depending on $\varepsilon$ only.	
	\end{theorem}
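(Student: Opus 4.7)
My plan is to prove the two assertions of the theorem separately. The first assertion splits into showing that the IP optimum coincides with the maximum profit of a $1$-in feasible solution of $\cal I$. For the easy direction, an integer point $\bar x\in P(S,\rho)$ automatically satisfies the weight, monotonicity, and $0/1$ constraints of~\eqref{eq:IIK}; it is also $1$-in, because every stairway starts at $(j_1,0)$ and the requirement $\rho(t,0)\geq 1$ forces the item $1(0)$ (with profit $p_1=1$) into the knapsack at the appropriate significant time. Extending $\bar x$ from significant to non-significant times by $x_t:=x_{t-1}$ preserves feasibility since capacities are constant between consecutive significant times, so the IP optimum is at most the maximum profit of a $1$-in solution of $\cal I$.

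For the reverse direction I take a $1$-in solution $x^*$ of maximum profit and exhibit $(S,\rho)$ with $x^*\in P(S,\rho)$. Without loss of profit or feasibility I first reorganise $x^*$ so that within each profit class $k$ only the lightest items $1(k),\ldots,r(k)$ are ever used, and at each time $t$ the class-$k$ items in use form a prefix $\{1(k),\ldots,s_t(k)\}$: items within a class are interchangeable (equal profit) and lighter substitutions only relax weights, while $s_t$ is automatically nondecreasing by monotonicity of $x^*$. Letting $\tau(k)$ be the earliest significant time at which class $k$ has an item in the knapsack under $x^*$ (or $+\infty$), I build the stairway recursively: set $k_1:=0$ and, given $k_h$, let $k_{h+1}$ be the smallest $k>k_h$ with $\tau(k)<\tau(k_h)$, stopping when no such index exists; set $t_h:=\tau(k_h)$. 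On the prescribed domain I define $\rho(t,k):=\min(\ceil{1/\varepsilon}+1,\,n^*_k(t))$, where $n^*_k(t)$ counts class-$k$ items in the knapsack at time $t$ under $x^*$; the intra- and inter-stair monotonicity conditions on $\rho$ inherit directly from monotonicity of $n^*_k(\cdot)$. The delay constraints $x_{t,i(k)}=0$ for $t<t_h,\,k<k_{h+1}$ hold because the recursive choice forces $\tau(k)\geq t_h$ for every $k\in[k_h,k_{h+1})$, while for $k<k_h$ an earlier stair provides an even larger threshold; the prefix-forcing equalities follow from the preprocessing and the cap-and-zero equalities are automatic from the definition of $\rho$. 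Combined with the easy direction this gives equality.

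For the size bound I count pairs $(S,\rho)$. With $J=|\T|=\Theta(\varepsilon^{-1}\log T)$ and $K=\Theta(\varepsilon^{-1}\log\frac{T}{\varepsilon})$, Lemma~\ref{lem:stairway} bounds the number of stairways by $2^{K+J+1}$. For a fixed stairway, the sets $\T_h$ are pairwise disjoint subsets of $\T$ and $|\KK_h|\leq\Ce+1$, so the domain of $\rho$ has at most $J(\Ce+1)=\Theta(\varepsilon^{-2}\log T\log\varepsilon^{-1})$ entries, each taking one of $\ceil{1/\varepsilon}+2$ values; so at most $(\ceil{1/\varepsilon}+2)^{J(\Ce+1)}$ functions $\rho$. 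Both factors have the form $T^{g(\varepsilon)}$ for an $\varepsilon$-dependent $g$, so the total number of pieces is $T^{f_0(\varepsilon)}$. Each $P(S,\rho)$ has $O(Tn)$ inequalities (weights, monotonicity, box, and $O(Tn)$ variable-fixing equations), and a standard Balas extended formulation for $\mathrm{conv}(\bigcup_{S,\rho}P(S,\rho))$ has size linear in the total size of the pieces, giving at most $O(Tn)\cdot T^{f_0(\varepsilon)}=nT^{f(\varepsilon)}$ constraints. The main obstacle is the recursive stairway construction in the second paragraph: choosing $k_{h+1}$ as the smallest index beyond $k_h$ whose first-use time is strictly earlier than $t_h$ is precisely what makes $k_h$ the unique $\tau$-minimiser on $[k_h,k_{h+1})$, which is exactly what the delay block of $P(S,\rho)$ demands; once this is fixed, the prefix reduction, the verification of $\rho$-monotonicity, and the counting are all routine.
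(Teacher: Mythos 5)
Your proposal is correct and follows essentially the same route as the paper: the paper's proof simply asserts that ``one of the choices of $(S,\rho)$ will be the correct one'' for an optimal $1$-in solution (with the lightest-items-per-class normalization) and then multiplies the number of stairways by the number of maps $\rho$ to get the $T^{f(\varepsilon)}$ bound, which is exactly your guess-and-count structure. Your second paragraph merely makes the paper's one-sentence guessing claim explicit -- the recursive choice of $k_{h+1}$ as the first class with strictly earlier first (significant) insertion time and $\rho(t,k)=\min(\ceil{1/\varepsilon}+1,\,n^*_k(t))$ is precisely the ``correct'' pair the paper has in mind, and your counting matches the paper's up to the same $T^{f(\varepsilon)}$ absorption.
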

	
	\begin{proof}
	Note that one of the choices of $(S,\rho)$ will be the correct one, i.e. it will predict the stairway $S$ associated to an optimal $1$-in solution, as well as the number of items that this solution takes for each entry of the grid it guessed. Then there exists an optimal solution that takes, for each time $t$ and class $k$ for which a guess $\rho(t,k)$ was made, the $\rho(t,k)$ items of smallest weight from this class, and no other object if $\rho(t,k) \leq {\frac{1}{\epsilon}}$. These are exactly the constraints imposed in $P(S,\rho)$. The second part of the statement follows from the fact that the possible choices of $(S, \rho)$ are 
		$$\begin{array}{ccrl}
		(\# \hbox{ stairways})& \cdot & (\# \hbox{ possible values in each entry of } \rho)& ^{( \# \hbox{ entries of a vector $\rho$ })}\\
		& & \multicolumn{1}{c}{=} \\
		2^{O(\frac{1}{\varepsilon}\log \frac{T}{\varepsilon})} & \cdot & 
		O(\frac{1}{\varepsilon}) & ^{
			O(\frac{1}{\varepsilon} \log \frac{T}{\varepsilon}) C_\varepsilon}\\
		& & \multicolumn{1}{c}{=} \\
		\multicolumn{4}{c}{(\frac{T}{\varepsilon})^{O(\frac{1}{\varepsilon})} \cdot ( \frac{T}{\varepsilon})^{O((\frac{1}{\varepsilon})^3)},}
		\end{array}
		$$
		
		\noindent and each $(S, \rho)$ has $g(\varepsilon)O(Tn)$ constraints, where $g$ depends on $\varepsilon$ only.	
	%
	\end{proof}

\subsection{Rounding}\label{sec:round}

By convexity, there is a choice of $S$ and $\rho$ given as in the previous section such that any optimal solution of 
\begin{equation}
\max \sum_{t \in [T]}p^T x_{t}  \; \; \hbox{ s.t. }\; \; \  x \in P(S, \rho)  \label{eq:opt-S-rho}
\end{equation} 
is also an optimal solution to 
$$\max \sum_{t \in [T]}p^T x_{t}  \; \; \hbox{ s.t. }\; \; \  x \in \text{conv}(\cup_{S,\rho} P(S, \rho)).$$ 
Hence, we can focus on rounding an optimal solution $\xo$ of \eqref{eq:opt-S-rho}. We assume that the items are ordered so that $\frac{p_1}{w_1} \geq \frac{p_2}{w_2} \geq \dots \geq \frac{p_n}{w_n}$. Moreover, let ${\cal I}^t$ (resp. ${\cal E}^t$) be the set of items from $[n]$ that are $t$-included (resp. $t$-excluded) for $t \in [T]$, and let $W_t:=w^Tx^*_t$.

	\begin{algorithm}[h!t]
	\caption{}
	\label{algo:x-opt}
	\begin{algorithmic} [1]
	\medskip
		\STATE{Set $\bar{x}_{0}=\boldsymbol{0}$.}
		\medskip
		\STATE{For $t=1,\dots,T$:
			\vspace{-2mm}		    
			\begin{enumerate}[(a)]
				\setlength{\itemsep}{0pt}
				
				\item Set $\bar{x}_t=\bar{x}_{t-1}$.
				\item\label{step:add-included} Set $\bar{x}_{t,i}=1$ for all $i \in {\cal I}^t$. 
				\item While {$W_t - w^T\bar{x}_t>0$:} 
				\vspace{-3mm}\begin{enumerate}[(i)]
				\setlength{\itemsep}{0pt}
					\item Select the smallest $i \in [n]$ such that $i \notin {\cal E}^t$ and $\bar{x}_{t,i}<1$.
					\item Set $\bar{x}_{t,i}=\bar{x}_{t,i} + \min \{1-\bar{x}_{t,i}, \frac{W_t - w^T\bar{x}_t }{w_i} \}$. 
				\end{enumerate}
			\end{enumerate}}
		\end{algorithmic}
	\end{algorithm}


\noindent Respecting the choices of $S$ and $\rho$, i.e. included/excluded items at each time $t$, Algorithm \ref{algo:x-opt} greedly adds objects into the knapsack, until the total weight is equal to $W_t$. Recall that in \maxk \ one obtains a rounded solution which differs from the fractional optimum by the profit of at most one item. Here the fractionality pattern is more complex, but still under control. In fact, as we show below, $\x$ is such that $\sum_{t \in [T]}p^T \x_{t} = \sum_{t \in [T]}p^T \xo_{t}$ and, for each $h \in [|S|]$ and $t \in [T]$ such that $t_h \leq t < t_{h-1}$, vector $\x_t$ has at most $|S|-h+1$ fractional components that do not correspond to items in profit classes $k \in K$ with at least ${\frac{1}{\epsilon}}+1$ $t$-included items. We use this fact to show that $\floor{\x}$ is an integral solution that is $(1-2\epsilon)$-optimal.

\begin{theorem}\label{thr:rounding}
	Let $x^*$ be an optimal solution to \eqref{eq:opt-S-rho}.
	Algorithm \ref{algo:x-opt} produces, in time $O(T+n)$, a vector $\x \in P(S,\rho)$ such that $\sum_{t \in [T]}p^T \floor{\x_{t}} \geq (1 - 2\varepsilon) \sum_{t \in [T]}p^T \xo_{t}$. 
\end{theorem}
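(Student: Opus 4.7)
The plan is to verify three claims in order: (I) $\bar x\in P(S,\rho)$ and $w^T\bar x_t=W_t$ for every $t$; (II) $\sum_t p^T\bar x_t=\sum_t p^T\xo_t$; (III) for every $t\in[t_h,t_{h-1})$ (a \emph{phase~$h$} time), $\bar x_t$ has at most $|S|-h+1$ \emph{bad} fractional coordinates, namely fractional entries whose profit class contains fewer than $\ceil{1/\varepsilon}+1$ $t$-included items. The final rounding bound then splits the loss $L:=\sum_t p^T(\bar x_t-\lfloor\bar x_t\rfloor)$ into good and bad parts and handles each separately. The $O(T+n)$ running time is immediate by amortization: across all time steps, step~2(b) sets each item to $1$ at most once and the while loop fully completes each item at most once, giving $O(n)$ total inner work, while the outer loop contributes $O(T)$.

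Claims (I) and (II) are proved jointly by induction on $t$. Monotonicity $\bar x_t\ge \bar x_{t-1}$ is built into step~2(a); the inclusion constraints $\bar x_{t,i}=1$ for $i\in\mathcal I^t$ are enforced in step~2(b); the while loop avoids $\mathcal E^t$ by construction, so exclusions hold. The delicate step is showing that after step~2(b) one still has $w^T\bar x_t\le W_t$, so that the while loop enters with non-negative residual budget and terminates with weight exactly $W_t$; this is argued by comparison against $\xo_t$, which also takes every $i\in\mathcal I^t$ to value $1$, using the inductive bound on $w^T\bar x_{t-1}$ together with the fact that the support of $\bar x_{t-1}$ avoids $\mathcal E^t$. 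Given (I), claim (II) is then standard: $\bar x_t$ fills budget $W_t$ greedily by $p_i/w_i$ respecting the inclusion/exclusion pattern at time $t$, which is LP-optimal for fractional knapsack with fixed-to-one and fixed-to-zero side constraints; this yields $p^T\bar x_t\ge p^T\xo_t$, while the reverse inequality follows from the global optimality of $\xo$ and $\bar x\in P(S,\rho)$.

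Claim (III) is proved by induction on $t$, tracking the set $B_t$ of bad fractional coordinates of $\bar x_t$. The key observations are that step~2(b) only promotes fractionals to $1$ (weakly shrinking $B_t$) and the while loop leaves at most one new fractional per execution. Within a single phase, a careful bookkeeping argument based on the order in which the while loop processes items shows that any new fractional appearing at time $t$ must either be good (because $\rho$ has already forced $\ceil{1/\varepsilon}+1$ items of its class into the knapsack) or exactly balance a bad fractional promoted in step~2(b) of the same time $t$, so $|B_t|\le |B_{t-1}|$. At a phase boundary $t=t_h$ new profit classes $k_h,\dots,k_{h+1}-1$ become eligible and $\rho$ imposes fresh inclusions, allowing the while loop to leave behind at most one genuinely new bad fractional, so $|B_t|\le|B_{t-1}|+1$. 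Iterating from phase $|S|$ (trivially with at most one bad fractional) down to phase $1$ yields the bound.

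Finally, the rounding loss splits as $L=L_G+L_B$. A good fractional in class $k$ at time $t$ costs at most $(1+\varepsilon)^{-k}$ when rounded down, while that class contains at least $\ceil{1/\varepsilon}+1$ $t$-included items each contributing $(1+\varepsilon)^{-k}$ to $p^T\bar x_t$, so the loss is at most an $\varepsilon$-fraction of the integer class profit; summing over $t$ and $k$ gives $L_G\le \varepsilon\sum_t p^T\xo_t$. The main obstacle is $L_B$: a bad fractional in phase $h$ may carry profit as large as $(1+\varepsilon)^{-k_h}$, there may be $|S|-h+1$ of them per time step, and up to $\Theta(T)$ time steps per phase, so a naive bound is too weak. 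I would handle $L_B$ by per-layer charging: the bad fractionals at time $t\in[t_h,t_{h-1})$ are put in bijection with the stairway points $(j_{h'},k_{h'})$ for $h'\le h$, each of which forces at least one class-$k_{h'}$ item into the knapsack for every $t\ge t_{h'}$ (since $\rho(t,k_{h'})\ge 1$), contributing $(1+\varepsilon)^{-k_{h'}}(T-t_{h'}+1)$ to $\sum_t p^T\xo_t$. Since $k_{h'}\le k_h$ for $h'\le h$, the integer profit so forced majorizes the bad-fractional profit being charged, and the $\varepsilon$-slack emerges from the geometric decay of class profits combined with the stairway monotonicity. Summing the charges gives $L_B\le \varepsilon\sum_t p^T\xo_t$, so $L\le 2\varepsilon\sum_t p^T\xo_t$, which is the claim.
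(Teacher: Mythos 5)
Your outline (I)--(III) mirrors the architecture of the paper's proof, and your claims (I) and (III) together with the good-class accounting are essentially the paper's Claims on weights, on $|I_k\cap \mathcal{F}_t|\leq 1$, and the bound via $\mathcal{L}_t$. However, your justification of (II) has a genuine gap, and it sits at the technical heart of the theorem. You assert that $\x_t$ is ``LP-optimal for fractional knapsack with fixed-to-one and fixed-to-zero side constraints,'' but this is false in general: the algorithm inherits $\x_{t-1}$ and can only \emph{add}, and at a significant time $t_h$ new items become eligible that may precede already-inserted items in the profit/weight order (the classes $k_h\leq k<k_{h+1}$ lose their zero-forcing, and items $i(k)$ with $i>\rho(t,k)$ become free once the guess is ``at least $\ceil{\frac{1}{\varepsilon}}+1$''). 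If the residual budget $W_t-W_{t-1}$ is too small to absorb such newly eligible high-ratio items, $\x_t$ is \emph{not} a greedy prefix of the time-$t$ eligible items and is strictly worse than the standalone time-$t$ optimum, so your claimed inequality $p^T\x_t\geq p^T\xo_t$ does not follow from per-time greedy optimality. It is nevertheless true -- but only because $\xo$ is itself shackled by the coupling $x_t\leq x_{t+1}$, so $\xo$ cannot exploit those late arrivals either. Establishing this requires a cross-time exchange argument: the paper takes a lexicographically maximal $\tilde x\in P(S,\rho)$ with $w^T\tilde x_t=W_t$ for all $t$ beating $\x$, locates the minimal $(\tau,\ell)$ with $\tilde x_{\tau,\ell}>\x_{\tau,\ell}$, and perturbs $\tilde x_{\bar\tau}$ for \emph{all} $\bar\tau\geq\tau$ simultaneously to reach a contradiction. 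This step cannot be waved off as standard; without it the profit equality (II) is unproved.

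Your bad-fractional accounting also needs repair, in two respects. First, the layer indices are reversed: at a time $t$ with $t_h\leq t<t_{h-1}$, the stairway points whose $\rho$-guesses force items into the knapsack are those with $h'\geq h$ (i.e., $t_{h'}\leq t$), not $h'\leq h$; these contribute $\sum_{h'=h}^{|S|}p_{\max}^{h'}$ of forced integral profit at time $t$, which is the quantity against which the loss must be charged \emph{pointwise in $t$} -- no global $(1+\varepsilon)^{-k_{h'}}(T-t_{h'}+1)$ charging is needed. Second, and more importantly, your appeal to ``geometric decay plus stairway monotonicity'' omits the decisive structural fact: a bad fractional arising in phase $h'$ can only lie in a class of $\bar\KK_{h'}$, i.e., with index exceeding $k_{h'}+\Ce$, because every class $k\in\KK_{h'}$ carries a guess $\rho(t,k)$ that either forces the class fully integral (when $\rho(t,k)\leq\ceil{\frac{1}{\varepsilon}}$) or places it in $\mathcal{L}_t$ (when $\rho(t,k)=\ceil{\frac{1}{\varepsilon}}+1$, making its one fractional \emph{good}). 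Hence each bad fractional is worth at most $(1+\varepsilon)^{-\Ce}p_{\max}^{h'}\leq\varepsilon\, p_{\max}^{h'}$, and summing over the at most one new bad fractional per activated layer yields the paper's bound $\varepsilon\sum_{h'=h}^{|S|}p_{\max}^{h'}$; without isolating this $\Ce$-buffer role of $\KK_{h'}$, the $\varepsilon$-slack does not materialize. Finally, you never handle the small profit class $I_\infty$: its fractional entries are ``bad'' in your sense but are not charged to any stairway point; they need the separate (easy) bound of at most one fractional per time, each of profit at most $\frac{\varepsilon}{T}$, totaling $\varepsilon\, p_{\max}^1\leq\varepsilon\sum_t p^T\xo_t$, which is exactly where the second $\varepsilon$ in the $(1-2\varepsilon)$ guarantee is spent.
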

\noindent Theorem \ref{thr:rounding} will be proved in a series of intermediate steps.
\newpage
	\begin{myclaim}\label{cl:important}
		Let $t \in [T-1]$. Then:
		\begin{enumerate}[(i)]
			\item ${\cal I}^t\subseteq{\cal I}^{t+1}$ and ${\cal E}^t\supseteq{\cal E}^{t+1}$.
			\item 	${\cal I}^{t+1}\setminus{\cal I}^{t} \subseteq  {\cal E}^{t}$.
		\end{enumerate} 
	\end{myclaim}
	
	\begin{proof}	\begin{enumerate}[(i)]
			\item Immediately from the definition.
			\item If ${\cal I}^{t+1}\setminus{\cal I}^{t}\neq \emptyset$, we deduce $ t+1 \in \T$. Let $h \in [|S|]$ be such that $t_h \leq t < t_{h-1}$, where for completeness $t_0=T+1$. By construction, the items ${\cal I}^{t+1}\setminus{\cal I}^{t}$ can only be in buckets $k: k_h \leq k < k_{h+1} + \Ce$ where either $k < k_{h+1}$ or $k \in \KK_{h+1}$ and $\rho(t,k) \leq   \frac{1}{\varepsilon} $. Hence, all items from ${\cal I}^{t+1}\setminus{\cal I}^{t}$ are $t$-excluded.
		\end{enumerate}		
		\end{proof}

	\noindent Recall that, for $t\in[T]$, $W_t:= w^T x^*_t$. 
	The proof of the following claim easily follows by construction.
	
	\begin{myclaim}\label{cl:easy-peasy}
		\begin{enumerate}[(i)] 
			\item For any $h \in [|S|]$, $t \in [t_h-1]$, $k < k_{h+1}$ and $i \in I_k$, one has $x^*_{t,i}=\x_{t,i}=0$. 
			\item\label{obs:1} For $t\in [T-1]$ and $i \in [n]$, one has $\x_{t+1,i} \geq \x_{t,i} \geq 0$. 
			
			\item\label{obs:2}  For $t\in [T]$, one has: $x^*_{t,i}=\x_{t,i}=1$ for $i \in {\cal I}^t$ and $x^*_{t,i}=\x_{t,i}=0$ for  $i \in {\cal E}^t$.
		\end{enumerate}
		
	\end{myclaim}
 
Define ${\cal F}_t:=\{i \in [n]: \ 0 < \x_{t,i} < 1 \}$ to be the set of fractional components of $\x_t$ for $t \in [T]$. Recall that Algorithm \ref{algo:x-opt} sorts items by monotonically decreasing profit/weight ratio.
For items from a given profit class $k \in [K]$, this induces the order $i(1) < i(2) < \dots $ -- i.e. by monotonically increasing weight -- since all $i(k) \in I_k$ have the same profit.

The following claim shows that $\x$ is in fact an optimal solution to $\max\{ x: x\in P(S, \rho) \}$.

\begin{myclaim}\label{cl:x-opt}
		For each $t \in [T]$, one has $w^T\x_t= w^Tx^*_t$ and $p^T\x_t= p^Tx^*_t$. 
\end{myclaim}

\begin{proof}
		We first prove the statement on the weights by induction on $t$, the basic step being trivial. Suppose it is true up to time  $t-1$. The total weight of solution $\x_t$ after step \eqref{step:add-included} is 
$$		\begin{array}{lll}
		w^T\x_{t-1} + \sum_{i \in {\cal I}^t\setminus {\cal I}^{t-1}}w_{i}(1-\x_{t-1,i}) & = &  W_{t-1} + \sum_{i \in {\cal I}^t\setminus {\cal I}^{t-1}}w_{i}(1-x^*_{t-1,i}) \\ & = & W_{t-1}+ \sum_{i \in {\cal I}^t\setminus {\cal I}^{t-1}}w_{i} \overset{(\ast)}{\leq} W_{t},
		\end{array}$$
		

		\noindent where the equations follow by induction, Claim \ref{cl:easy-peasy}.(iii), and Claim \ref{cl:important}.(ii), and $(\ast)$ follows by observing $w^T \xo_t - w^T \xo_{t-1} \geq \sum_{i \in {\cal I}^t\setminus {\cal I}^{t-1}}w_{i}$. $\x_t$ is afterwords increased until its total weight is at most $W_t$. Last, observe that $W_t$ is always achieved, since it is achieved by $x^*_t$. This concludes the proof of the first statement. 
		
		We now move to the statement on profits. Note that it immediately follows from the optimality of $x^*$ and the first part of the claim if we show that $\x$ is the solution maximizing $p^T x_t$ for all $t \in [T]$, among all  $x \in P(S,\rho)$ that satisfy $w^Tx_t=W_t$ for all $t \in[T]$.
		So let us prove the latter. Suppose by contradiction this is not the case, and let $\tilde x$ be one such solution such that $p^T \tilde x_t>p^T \x_t$ for some $t \in [T]$. Among all such $\tilde x$, take one that is lexicographically maximal, where entries are ordered $(1,1), (1,2), \dots, (1,n), (2,1) \dots, (T,n)$. Then there exists $\tau \in [T]$, $\ell \in [n]$ such that $\tilde x_{\tau,\ell}>\x_{\tau,\ell}$. Pick $\tau$ minimum such that this happens, and $\ell$ minimum for this $\tau$.  
		Using that $\x_{\tau,i}={\tilde x}_{\tau,i}$ for $i \in \I^\tau \cup {\cal E}^{\tau}$ since $\x, {\tilde x} \in P(S, \rho )$ and recalling $w^T \x_{\tau}=w^T \tilde x_{\tau}=W_\tau$ one obtains
		\begin{equation} \label{eq:resid_weigh}
		\sum_{i\in [n]\setminus ({\cal I}^{\tau}\cup {\cal E}^{\tau})} w_{i} \x_{{\tau},i} = \sum_{i\in [n] \setminus ({\cal I}^{\tau}\cup {\cal E}^{\tau})} w_{i} \tilde x_{{\tau},i} .
		\end{equation}
		
		It must be that $\x_{\tau,\ell} < 1$, since $\x_{\tau,\ell} < \tilde x_{\tau,\ell} \leq 1$, so step (c) of Algorithm \ref{algo:x-opt} in iteration $\tau$ did not change any item $\hat \ell > \ell$, i.e. $\x_{\tau,{\hat \ell}} = \x_{\tau-1,{\hat \ell}}$ for each ${\hat \ell} > \ell$. Additionally, $\ell \notin {\cal I}^\tau$ beacuse $\x_{\tau,\ell} < 1$, and $\ell \notin {\cal E}^\tau$ since otherwise $\x_{\tau,\ell} = {\tilde x}_{\tau,\ell} = 0$. Hence, $\ell \in [n]\setminus ({\cal I}^{\tau}\cup {\cal E}^{\tau})$. By moving the terms corresponding to ${\hat \ell} > \ell$ to the right-hand side, we rewrite (\ref{eq:resid_weigh}) as follows
		\begin{equation*}  
		\sum_{ \substack{{\bar \ell} \in [n] \setminus ({\cal I}^{\tau}\cup {\cal E}^{\tau}): \\ 
				{\bar \ell} \leq \ell} } w_{\bar \ell} \x_{{\tau},{\bar \ell}}  = 
		\sum_{ \substack{{\bar \ell} \in [n] \setminus ({\cal I}^{\tau}\cup {\cal E}^{\tau}): \\ 
				{\bar \ell} \leq \ell} } w_{\bar \ell} \tilde x_{{\tau},{\bar \ell}} +
		\sum_{ \substack{{\hat \ell} \in [n] \setminus ({\cal I}^{\tau}\cup {\cal E}^{\tau}): \\ 
				{\hat \ell} > \ell} } w_{\hat \ell} (\tilde x_{{\tau},{\hat \ell}} - \underbrace{\x_{{\tau},{\hat \ell}}}_{=\x_{{\tau-1},{\hat \ell}}} ) .
		\end{equation*}
		By minimality of $\tau$ one has $\tilde x_{\tau-1} \leq \x_{\tau-1}$, so $w^T \tilde x_{\tau-1}=W_{\tau-1}=w^T\x_{\tau-1}$ implies $\tilde x_{\tau-1} = \x_{\tau-1}$ and thus 
		\begin{equation} \label{eq:l_weigh} 
		\sum_{ \substack{{\bar \ell} \in [n] \setminus ({\cal I}^{\tau}\cup {\cal E}^{\tau}): \\ 
				{\bar \ell} \leq \ell} } w_{\bar \ell} \x_{{\tau},{\bar \ell}}  = 
		\sum_{ \substack{{\bar \ell} \in [n] \setminus ({\cal I}^{\tau}\cup {\cal E}^{\tau}): \\ 
				{\bar \ell} \leq \ell} } w_{\bar \ell} \tilde x_{{\tau},{\bar \ell}} +
		\overbrace{
			\sum_{ \substack{{\hat \ell} \in [n] \setminus ({\cal I}^{\tau}\cup {\cal E}^{\tau}): \\ 
					{\hat \ell} > \ell} } w_{\hat \ell} (\tilde x_{{\tau},{\hat \ell}} - \tilde x_{{\tau-1},{\hat \ell}} )
		}^{\geq 0} .
		\end{equation}
		
		%
		Note that the items in $[n]$ are ordered according to monotonically decreasing profit/weight ratio. By minimality of $\ell$ subject to $\tau$ we have that $\x_{\tau,{\bar \ell}}  \geq \tilde x_{\tau,{\bar \ell}}$ for ${\bar \ell} < \ell$. Thus combining $\x_{\tau,\ell} < \tilde x_{\tau,\ell}$ with (\ref{eq:l_weigh}) gives that there exists $\beta<\ell$ such that $\x_{\tau,\beta} >\tilde x_{\tau,{\bar \ell}}$. Then for all $\bar \tau \geq \tau$, one can perturb $\tilde x$  by increasing $\tilde x_{\bar \tau,{\beta}}$ and decreasing $\tilde x_{\bar \tau,\ell}$  while keeping $\tilde x \in P(S,\rho)$ and $w^T \tilde x_{\bar \tau}=W_{\bar \tau}$, without decreasing $p^T\tilde x_{\bar \tau}$. This contradicts the choice of $\tilde x$ being lexicographically maximal.
\end{proof}


For $t \in [T]$ define ${\cal L}_t:=\{k \in [K]: \ |I_k \cap {\cal I}^t| \geq {\frac{1}{\epsilon}}+1 \ \}$ to be the set of classes with a large number of $t$-included items.  Furthermore, for $h = 1,2, \dots, |S|$:
\begin{itemize}
	\item
		Recall that $\KK_h=\{k \in [K]: \ k_h \leq k \leq k_h + \Ce \}$ are the classes of most profitable items present in the knapsack at times $t \in [T]: \ t_h \leq t < t_{h-1}$, since by definition no item is taken from a class $k < k_h$ at those times. Also by definition $\rho(t_h, k_h) \geq 1$, so the largest profit item present in the knapsack at any time $t \in [T]: t_h \leq t < t_{h-1}$ is item $1(k_h)$. Denote its profit by $p_{max}^h$.
	\item
	    Define $\bar\KK_h:=\{k \in [K]: \ k_h + \Ce < k \}$, i.e. it is the family of the other classes for which an object may be present in the knapsack at time $t \in [T]: \ t_h \leq t < t_{h-1}$.  
\end{itemize}




\begin{myclaim}\label{cl:who-many-fract}
	Fix $t \in [T]$, $t_h \leq t < t_{h-1}$. Then, $|I_k \cap {\cal F}_t|\leq 1$ for all $k \in [K]\cup \{\infty\}$. Moreover, $|((\cup_{k \in \bar\KK_h} I_k) \cap {\cal F}_t) \setminus {\cal F}_{t_{h-1}}| \leq 1$. \end{myclaim}

\begin{proof}
	
		We show this by induction on $t$. Fix $t\geq 1$ and suppose that $|I_k \cap {\cal F}_t|\leq 1$ for all $k\in [K]\cup\{\infty\}$. By construction, for a class $k$ such that $I_k \cap {\cal F}_t=\{i_k\}$, all items $j \in I_k$ with $\bar x_{t,j}=0$ follow $i_k$ in the profit/weight order. Hence, at time $t+1$, the algorithm will not increase $\bar x_{t+1,j}$ for any $j \in I_k$ until $\bar x_{t+1,i_k}$ is set to $1$. We can repeat this argument and conclude $|I_k \cap {\cal F}_{t+1}|\leq 1$. 
		Note that this also settles the basic step $t=0$ and the case $I_k \cap {\cal F}_t=\emptyset$, concluding the proof of the first part. A similar argument settles the other statement.	
\end{proof}

\begin{myclaim}\label{cl:bar-x-fract}
Let $h \in [|S|]$, then: $p((\cup_{k \in \bar\KK_h \setminus {\cal L}_t} I_k) \cap {\cal F}_t ) \leq \epsilon \sum_{\bar{h}=h}^{|S|} p_{\max}^{\bar{h}}$, $\ \forall t: t_h \leq t < t_{h-1}$.
\end{myclaim}

\begin{proof}
		We prove the statement by induction on $h$. For $h=|S|$, let $t$ be such that $t_{|S|} \leq t < t_{|S|-1}$ and $\bar{t}=t_{|S|}-1$. We have that $(\cup_{k \in \bar\KK_{|S|}} I_k) \cap {\cal F}_{\bar{t}}=\emptyset$ so $((\cup_{k \in \bar\KK_{|S|}} I_k) \cap {\cal F}_t) \setminus {\cal F}_{\bar{t}} = (\cup_{k \in \bar\KK_{|S|}} I_k) \cap {\cal F}_t$. By using Claim \ref{cl:who-many-fract} we obtain 
		$$|(\cup_{k \in \bar\KK_{|S|} \setminus {\cal L}_t} I_k) \cap {\cal F}_t| \leq |(\cup_{k \in \bar\KK_{|S|}} I_k) \cap {\cal F}_t| \leq 1.$$ 
		The largest profit of an item in $\cup_{k \in \bar\KK_{|S|}} I_k$ is smaller than $(1+\epsilon)^{-\Ce} p_{\max}^{|S|} \leq \epsilon p_{\max}^{|S|}$ by the definition of $\bar\KK_{|S|}$ and recalling $\Ce={\log_{1+\varepsilon} \frac{1}{\varepsilon}}$. The statement follows.
		
		Assume that the statement holds for all $h$ such that $2 \leq h \leq |S|$ and prove it for $h=1$. Let $t$ such that $t_1 \leq t < t_0=T+1$ and $\bar{t}=t_1-1$. Observe that ${\cal L}_t \supseteq {\cal L}_{\bar{t}}$ and $(\cup_{k \in \bar\KK_1} I_k) \cap {\cal F}_{\bar t} \subseteq (\cup_{k \in \KK_2 \cup \bar\KK_2} I_k) \cap {\cal F}_{\bar t}$ so 
		$$(\cup_{k \in \bar\KK_1 \setminus {\cal L}_t} I_k) \cap {\cal F}_{\bar t} \subseteq (\cup_{k \in (\KK_2 \cup \bar\KK_2) \setminus {\cal L}_{\bar{t}}} I_k) \cap {\cal F}_{\bar t} = (\cup_{k \in \bar\KK_2 \setminus {\cal L}_{\bar{t}}} I_k) \cap {\cal F}_{\bar t}.$$ 
		Thus, we obtain:
		$$\begin{array}{lll}
		p((\cup_{k \in \bar\KK_1 \setminus {\cal L}_t} I_k) \cap {\cal F}_t ) & = & 
		p(((\cup_{k \in \bar\KK_1 \setminus {\cal L}_t} I_k) \cap {\cal F}_t) \setminus {\cal F}_{\bar t} ) +
		p((\cup_{k \in \bar\KK_1 \setminus {\cal L}_t} I_k) \cap {\cal F}_{\bar t} ) \\
		& \leq &  
		p(((\cup_{k \in \bar\KK_1 \setminus {\cal L}_t} I_k) \cap {\cal F}_t) \setminus {\cal F}_{\bar t} ) +
		p((\cup_{k \in \bar\KK_2 \setminus {\cal L}_{\bar{t}}} I_k) \cap {\cal F}_{\bar t}) \\ 
		& \leq & \epsilon p_{\max}^1 + \epsilon \sum_{\bar{h}=2}^{|S|} p_{\max}^{\bar{h}},
		\end{array}
		$$
		where in the last inequality we used Claim \ref{cl:who-many-fract} and the inductive hypothesis. 
\end{proof}


\noindent \textbf{Proof of Theorem \ref{thr:rounding}}. We focus on showing that, $\forall t \in [T]$:
\begin{equation} \label{eq:opt-K-approx}
\sum_{i \in [n] \setminus I_\infty} p_i \floor{\x_{t,i}} \geq \sum_{i \in [n] \setminus I_\infty} p_i \x_{t,i} - \sum_{i \in ([n] \setminus I_\infty) \cap {\cal F}_t} p_i \geq (1-\epsilon) \sum_{i \in [n] \setminus I_\infty} p_i \x_{t,i}.
\end{equation} 
The first inequality is trivial and, if $t < t_{|S|}$, so is the second, since in this case $\bar x_{t,i}=0$ for all $i \in [n]\setminus I_\infty$. Otherwise, $t$ is such that $t_h \leq t < t_{h-1}$ for some $h \in [|S|]$ with $t_0=T+1$. Observe that:
%
%
$$\begin{array}{lll}
([n] \setminus I_\infty) \cap {\cal F}_t & = & ((\cup_{k \in (\KK_h \cup \bar\KK_h) \setminus {\cal L}_t}I_k) \cap {\cal F}_t ) \cup ((\cup_{k \in (\KK_h \cup \bar\KK_h) \cap {\cal L}_t}I_k) \cap {\cal F}_t ) \\
& = & ((\cup_{k \in \bar\KK_h \setminus {\cal L}_t}I_k) \cap {\cal F}_t ) \cup ((\cup_{k \in {\cal L}_t}I_k) \cap {\cal F}_t )
\end{array}$$
For $k \in [K]$ denote the profit of $i \in I_k$ with $p^k$. We have:
\begin{equation}\label{eq:opt-max-fract}\begin{array}{lll}
\sum_{i \in ([n] \setminus I_\infty)\cap {\cal F}_t} p_i \x_{t,i} & =
& p((\cup_{k \in \bar\KK_h \setminus {\cal L}_t}I_k) \cap {\cal F}_t ) + p((\cup_{k \in {\cal L}_t}I_k) \cap {\cal F}_t ) \\ \hbox{(By Claim \ref{cl:bar-x-fract} and Claim \ref{cl:who-many-fract})}& \leq  & \epsilon \sum_{\bar{h}=h}^{|S|} p_{\max}^{\bar{h}} + \sum_{k \in {\cal L}_t} p^k.
\end{array}\end{equation}

If $k=k_{\bar h} \in {\cal L}_t $ for $\bar{h} \in [|S|]$ then $\sum_{i \in I_k}p_i \x_{t,i} \geq ({\frac{1}{\epsilon}} + 1) p^k = p_{\max}^{\bar{h}} + {\frac{1}{\epsilon}} p^k$. Together with $\rho(k_h, t_h) \geq 1 \ \forall h \in [|S|]$ and the definition of ${\cal L}_t$ this gives:
\begin{equation} \label{eq:opt-min-int}
\sum_{i \in [n] \setminus I_\infty} p_i \x_{t,i} \geq \sum_{\bar{h}=h}^{|S|} p_{\max}^{\bar{h}} +  \frac{1}{\epsilon} \sum_{k \in {\cal L}_t} p^k. 
\end{equation} 
Put together, \eqref{eq:opt-max-fract} and \eqref{eq:opt-min-int} imply \eqref{eq:opt-K-approx}. Morever, by Claim \ref{cl:who-many-fract}, $|I_\infty \cap {\cal F}_t| \leq 1$ for all $t \in [T]$ and since we are working with an $\epsilon$-well-behaved instance $p_i \leq \frac{\epsilon}{T}=\frac{\epsilon}{T}p_{\max}^1$ so $\sum_{t \in [T]} \sum_{i \in I_\infty \cap {\cal F}_t} p_i \leq \epsilon p_{\max}^1$. The last fact with \eqref{eq:opt-K-approx} and Claim \ref{cl:x-opt} gives the statement of the theorem. 
	\qedd


	

Theorem \ref{thr:IIK} now easily follows from Theorems \ref{thr:well-behaved}, \ref{thr:disjunction-iik}, and \ref{thr:rounding}. 

\smallskip 

\noindent \textbf{Proof of Theorem \ref{thr:IIK}.} Since we will need items to be sorted by profit/weight ratio, we can do this once and for all before any guessing is performed. Classical algorithms implement this in $O(n\log n)$. By Theorem \ref{thr:well-behaved}, we know we can assume that the input instance is $\varepsilon$-well-behaved, and it is enough to find a solution of profit at least $(1-2\varepsilon)$ the profit of a $1$-in solution of maximum profit -- by Theorem \ref{thr:rounding}, this is exactly vector $\floor{\x}$. 
In order to produce $\floor{\x}$, as we already sorted items by profit/weight ratio, we only need to solve the LPs associated with each choice of $S$ and $\rho$, and then run Algorithm \ref{algo:x-opt}. The number of choices of $S$ and $\rho$ are $T^{f(\varepsilon)}$, and each LP has $g(\varepsilon)O(nT)$ constraints, for appropriate functions $f$ and $g$ (see the proof of Theorem \ref{thr:disjunction-iik}). Algorithm \ref{algo:x-opt} runs in time $O(\frac{T}{\varepsilon}\log\frac{T}{\varepsilon}+n)$. The overall running time is: 
$$O(n\log n+n(n+T+ T^{f(\varepsilon)}(f_{LP}(g(\varepsilon)O(nT))+\frac{T}{\varepsilon}\log\frac{T}{\varepsilon})))=O(nT^{h(\varepsilon)}f_{LP}(n)),$$
\noindent where $f_{LP}(m)$is the time required to solve an LP with $O(m)$ variables and constraints, and $h: \R\rightarrow \setN_{\geq 1}$ is an appropriate function. \qedd


\section{Generalizations}\label{sec:ext}

Following Theorem \ref{thr:IIK}, one could ask for a PTAS for the general incremental knapsack (\ik) problem. This is the modification of \iik \ (introduced in \cite{Bienstock13}) where the objective function is $
p_{\Delta}(x):= \sum_{t \in [T]} \Delta_t \cdot p^T x_t
$,
where $\Delta_t \in \Zp$ for $t \in [T]$ can be seen as time-dependent discounts. We show here some partial results. 


\begin{corollary} \label{cor:monotone}
	There exists a PTAS-preserving reduction from IK to \iik, assuming $\Delta_t \leq \Delta_{t+1}$ for $t \in [T-1]$. Hence, the hypothesis above, IK has a PTAS.
\end{corollary}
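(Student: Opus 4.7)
The plan is to give a PTAS-preserving reduction that ``unfolds'' each time $t$ of an \ik\ instance into $\Delta_t$ consecutive time slots of an \iik\ instance all sharing the capacity $b_t$, so that the discount $\Delta_t$ is realized as the number of \iik-times at which the same knapsack configuration contributes $p^T x_t$. Formally, given a monotone \ik\ instance $\I$, I would build an \iik\ instance $\I'$ on the same items, with $T' := \sum_{t \in [T]} \Delta_t$ times and capacity $b'_s := b_{t(s)}$, where $t(s) \in [T]$ is the unique index satisfying $\sum_{t' < t(s)} \Delta_{t'} < s \leq \sum_{t' \leq t(s)} \Delta_{t'}$. Monotonicity of $b$ and of $t(\cdot)$ makes $b'$ non-decreasing, so $\I'$ is a valid \iik\ instance.

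The reduction is PTAS-preserving by a lifting/projection argument. Lifting a feasible \ik-solution $y$ to $x_s := y_{t(s)}$ yields a trivially feasible and monotone \iik-solution, and since each $t \in [T]$ contributes exactly $\Delta_t$ identical \iik-slots,
\[
\sum_{s \in [T']} p^T x_s \;=\; \sum_{t \in [T]} \Delta_t \, p^T y_t \;=\; p_\Delta(y).
\]
Conversely, projecting any feasible \iik-solution $x$ via $y_t := x_{\sigma(t)}$ with $\sigma(t) := \sum_{t' \leq t} \Delta_{t'}$ (the last \iik-slot associated with \ik-time $t$) preserves feasibility and monotonicity, and \iik-monotonicity gives $x_s \leq y_{t(s)}$ for every $s$, so $\sum_s p^T x_s \leq p_\Delta(y)$. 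Combining the two directions yields $\opt{\I} = \opt{\I'}$, and the projection sends any $(1-\varepsilon)$-approximate \iik-solution to a $(1-\varepsilon)$-approximate \ik-solution; applying Theorem~\ref{thr:IIK} to $\I'$ then produces the PTAS.

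The main obstacle is that $T' = \sum_t \Delta_t$ may be exponential in the input encoding when the $\Delta_t$ are written in binary, which would break the polynomiality claim of Theorem~\ref{thr:IIK}. To absorb this I would first round the discounts: after discarding items with $w_i > b_T$ (which are never feasible at any time by monotonicity of $y$), $p^\star := \max_i p_i$ satisfies $\Delta_T \, p^\star \leq \opt{\I}$, so one can safely replace each $\Delta_t$ by the monotone positive integer $\tilde\Delta_t := \ceil{\Delta_t \cdot nT/(\varepsilon \Delta_T)} \in [1, \ceil{nT/\varepsilon}]$. A direct estimate with $c := \varepsilon \Delta_T/(nT)$ gives $c\tilde\Delta_t \in [\Delta_t, \Delta_t + c]$, hence $c \sum_t \tilde\Delta_t \, p^T y_t \in [p_\Delta(y),\, p_\Delta(y) + \varepsilon \Delta_T p^\star]$, so every $(1-\varepsilon)$-approximate maximizer of the rescaled objective $\sum_t \tilde\Delta_t \, p^T x_t$ is a $(1-O(\varepsilon))$-approximate maximizer of $p_\Delta$. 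Applying the unfolding reduction to the rounded instance produces an \iik\ instance with $T' = O(nT^2/\varepsilon)$, to which Theorem~\ref{thr:IIK} applies in time polynomial in $n$ and $T$ for fixed $\varepsilon$, completing the reduction.
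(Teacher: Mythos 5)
Your proposal is correct and takes essentially the same route as the paper: the unfolding of each \ik-time $t$ into $\Delta_t$ identical slots with the last-slot projection is exactly the paper's Corollary \ref{cor:maxdelta}, and your discount rounding at scale $c=\varepsilon\Delta_T/(nT)$, justified by $\Delta_T\, p^\star \le \opt{\I}$ after discarding items with $w_i > b_T$, is the paper's proof of Corollary \ref{cor:monotone} with a ceiling in place of its $\lfloor \Delta_t/C\rfloor$. Your ceiling is a minor cosmetic improvement (it keeps every rounded discount positive, as the \ik\ definition requires), but the decomposition, error analysis, and resulting $(1-O(\varepsilon))$ guarantee coincide with the paper's.
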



\noindent We start by proving an auxiliary corollary.

\begin{corollary} \label{cor:maxdelta}
	There exists a strict approximation-preserving reduction from \ik \ to \iik,  assuming that the maximum discount $\Delta_{\max} := \|\Delta\|_{\infty}$ is bounded by a polynomial  
	$$g(T, n,\log\|p\|_{\infty}, \log\|w\|_{\infty}).$$ 
	In particular, under the hypothesis above, \ik \ has a PTAS.
\end{corollary}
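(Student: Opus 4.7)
The plan is to reduce \ik \ to \iik \ by \emph{duplicating times}: for each original time $t \in [T]$, create $\Delta_t$ copies each with capacity $b_t$. Concretely, given an \ik \ instance $\I$ with data $(n, T, p, w, b, \Delta)$, construct an \iik \ instance $\I'$ with the same $n$ items (same profits and weights), and $T' := \sum_{t \in [T]} \Delta_t$ times, where the first $\Delta_1$ times have capacity $b_1$, the next $\Delta_2$ times have capacity $b_2$, and so on. By hypothesis $\Delta_{\max}$ is polynomially bounded, so $T' \leq T \cdot \Delta_{\max}$ is polynomial in the input size. Since $0 < b_t \leq b_{t+1}$ holds in $\I$, the capacities of $\I'$ are strictly positive and non-decreasing, so $\I'$ is a valid \iik \ instance.

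Next I would verify that the reduction is strictly approximation-preserving by matching objective values. Given a feasible \ik \ solution $(x_1, \dots, x_T)$ for $\I$, define $y$ on $\I'$ by assigning $x_t$ to each of the $\Delta_t$ copies of time $t$. Monotonicity $y_s \leq y_{s+1}$ holds trivially within copies of the same $t$ and follows from $x_t \leq x_{t+1}$ across the boundary between copies of $t$ and $t+1$; the capacity constraints hold because $w^T x_t \leq b_t$. The \iik \ value of $y$ is $\sum_{t \in [T]} \Delta_t \cdot p^T x_t = p_\Delta(x)$. Conversely, given a feasible \iik \ solution $y$ for $\I'$, let $\bar y^{(t)}$ denote $y$ at the \emph{last} copy of original time $t$, and define $x_t := \bar y^{(t)}$. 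Monotonicity of $y$ across copies of the same $t$ (together with $p \geq 0$) implies $p^T y_s \leq p^T x_t$ for every copy $s$ of $t$; summing gives $\sum_{s} p^T y_s \leq \sum_{t \in [T]} \Delta_t \cdot p^T x_t = p_\Delta(x)$. The constraints $x_t \leq x_{t+1}$ and $w^T x_t \leq b_t$ follow from those of $y$. Thus an \iik \ solution of value $V$ yields an \ik \ solution of value at least $V$, and the optimum values of $\I$ and $\I'$ coincide.

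With this correspondence, a $(1-\varepsilon)$-approximate solution for $\I'$ (obtained via Theorem \ref{thr:IIK}) immediately converts to a $(1-\varepsilon)$-approximate solution for $\I$: if $y$ is $(1-\varepsilon)$-optimal for $\I'$, the extracted $x$ satisfies $p_\Delta(x) \geq (\text{value of } y) \geq (1-\varepsilon) \cdot \opt{\I'} = (1-\varepsilon) \cdot \opt{\I}$. Since $T'$ is polynomial in the \ik \ input size, Theorem \ref{thr:IIK} runs in time polynomial in the original input for any fixed $\varepsilon$, yielding a PTAS for \ik \ under the bounded $\Delta_{\max}$ hypothesis.

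The main subtlety (rather than obstacle) is ensuring that the extraction from \iik \ back to \ik \ does not lose value: choosing the \emph{last} copy of each original time as the \ik \ representative is essential, because items may have been added during the copies of time $t$ in the \iik \ solution, and any earlier copy would only give a weaker \ik \ solution. The monotonicity of \iik \ solutions makes the last-copy choice the one that dominates (componentwise) all other copies, which is exactly what guarantees that summing $p^T$ over copies is bounded by $\Delta_t \cdot p^T x_t$.
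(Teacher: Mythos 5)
Your proposal is correct and follows essentially the same route as the paper's proof: the identical time-duplication reduction with $T' = \sum_{t \in [T]} \Delta_t$ blocks of capacity $b_t$, replication of an \ik\ solution across copies to show $\opt{\I} \leq \opt{\I'}$, and extraction at the \emph{last} copy of each block (exactly the paper's $\x_t = \hat{x}_{\delta_t+\Delta_t}$), using monotonicity and $p \geq 0$ to bound the \iik\ value by $p_{\Delta}$ of the extracted solution. Your additional observation that the two optima coincide is a harmless strengthening of the paper's one-sided inequality, which already suffices for the approximation guarantee.
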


\begin{proof} Let $\I := (n, p, w, T, b, \Delta)$ be an IK instance with 
$\Delta_{\max} \leq g(T, n,\log\|p\|_{\infty}, \log\|w\|_{\infty}).$
The corresponding instance $\I' := (n, p, w, T', b')$ of \iik \ is obtained by setting $$T':=\sum\limits_{t \in [T]} \Delta_t \quad \hbox{ and } \quad b_{t'}':=b_t, \ \forall t' \in [T']: \ \delta_t +1 \leq t' \leq \delta_t+\Delta_t,$$ 
where $\delta_t :=\sum_{\bar t < t} \Delta_{\bar t}$ for $t \in [T]$. We have that $T' \leq T \cdot g(T, n,\log\|p\|_{\infty}, \log\|w\|_{\infty})$ so the size 
	of $\I'$ is polynomial in the size of $\I$.
	
	Given an optimal solution $\xo \in \{0,1\}^{Tn}$ to $\I$, and $x' \in \{0,1\}^{T'n}$ such that $x_{t'}'=x_t$ for all $t \in [T]$ and $\delta_t +1 \leq t' \leq \delta_t+\Delta_t$, one has that $x'$ is feasible in $\I'$ so 
	\[
	\text{OPT}(\I) = p_{\Delta}(\xo) = \sum\limits_{t \in [T]} \Delta_t \cdot p^T \xo_t = \sum\limits_{t' \in [T']} p^T x'_{t'} \leq \text{OPT}(\I').
	\] 
	Let $\hat{x}$ be a $\alpha$-approximated solution to $\I'$. Define $\x \in \{0,1\}^{Tn}$ as $\x_t= \hat{x}_{\delta_{t}+\Delta_t}$ for $t \in [T]$. Then clearly $\x_t\leq \x_{t+1}$ for $t \in [T-1]$. Moreover,
	$$w^T \bar x_t = w^T \hat{x}_{\delta_{t}+\Delta_t} \leq b'_{\delta_{t}+\Delta_t} = b_t, \quad \ \forall t \in [T].$$ 
	Hence $\x$ is a feasible solution for $\I$ and 
	$$p_{\Delta}(\x)=\sum\limits_{t \in [T]} \Delta_t \cdot p^T \x_t \geq \sum\limits_{\bar{t} \in [T']} p^T \hat x_{\bar{t}}.$$ 
	Finally, one obtains:
	\begin{equation}
		\frac{p_{\Delta}(\x)}{\text{OPT}(\I)} \geq \frac{\sum_{\bar{t} \in [T']} p^T \hat x_{\bar{t}}}{\text{OPT}(\I')} \geq \alpha.
	\end{equation}	
	%
		
\end{proof}

\noindent \textbf{Proof of Corollary \ref{cor:monotone}.} Given an instance $\I$ of IK with monotonically increasing discounts, and letting $p_{\max} := \|p\|_{\infty}$, we have that the optimal solution of $\I$ is at least $\Delta_{max} \cdot p_{\max}$ since $w_i \leq b_T,\   \forall i \in [n]$, otherwise an element $i$ can be discarded from the consideration. Reduce $\I$ to an instance $\I'$ by setting $C=\frac{\varepsilon \Delta_{\max}}{Tn}$ and $\Delta_{t}'=\floor{\frac{\Delta_{t}}{C}}$. We get that $\Delta_{\max}' \leq Tn / \varepsilon$ thus satisfying the assumption of Corollary \ref{cor:maxdelta} for each fixed $\varepsilon>0$. Let $\xo$ be an optimal solution to $\I$ and $\x$ a $(1-\varepsilon)$-approximated solution to $\I'$, one has:
\vspace{-2mm}
\begin{equation*}
\begin{array}{llll}
p_{\Delta}(\x) &\geq C \cdot p_{\Delta}'(\x) \\ & \geq C \cdot (1-\varepsilon) p_{\Delta}'(\xo)\\ & \geq  (1-\varepsilon) ( p_{\Delta}(\xo) - C \sum\limits_t p^T \xo_t) \\
\ &\geq (1-\varepsilon)(p_{\Delta}(\xo) - \varepsilon \Delta_{max} \cdot p_{\max}) & \geq (1-2\varepsilon) p_{\Delta}(\xo).
\end{array}
\vspace{-7mm} 
\end{equation*} \qedd

\medskip

The proof of Corollary \ref{cor:monotone} only uses the fact that an item of the maximum profit is feasible at a time  with the highest discount. Thus its implications are broader. 

Of independent interest is the fact that there is a PTAS for the modified version of \iik \ when each item can be taken multiple times. Unlike Corollary \ref{cor:monotone}, this is not based on a reduction between problems, but on a modification on our algorithm. 

\begin{corollary} \label{cor:unbound}There is a PTAS for the following modification of \iik: in \eqref{eq:IIK}, replace $x_t \in \{0,1\}^n$ with: $x_t \in \Zp^{n}$ for $t \in [T]$; and $0 \leq x_t \leq d$ for $t \in [T]$, where we let $d \in (\Zp \cup \{\infty\})^n$ be part of the input. 
\end{corollary}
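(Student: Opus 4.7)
The plan is to extend the three-step framework of Sections \ref{sec:reduction}--\ref{sec:round} (well-behaved reduction, disjunctive formulation, rounding) to the setting where each item may be taken in any integer multiplicity up to $d_i$, with $d_i = \infty$ allowed.

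First, the reduction to $\varepsilon$-well-behaved instances in Theorem \ref{thr:well-behaved} does not exploit the $\{0,1\}$ structure anywhere: rounding profits and collapsing capacities work identically for nonnegative integer variables bounded by $d$. The $1$-in reduction of Observation \ref{obs:1-in} still applies, with the mild change that we guess the most profitable item type that has at least one copy in the knapsack, keeping the $O(n)$ overhead.

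Second, I would redefine $\rho(t,k)$ to count the \emph{total number of copies}, aggregated over all item types in $I_k$, of profit class $k$ sitting in the knapsack at time $t$. Values in $\{0,1,\dots,\ceil{\frac{1}{\varepsilon}}\}$ are still enumerated exactly, while values exceeding $\ceil{\frac{1}{\varepsilon}}$ are collapsed into a single ``large'' case where the LP is left free. Within a class, since all copies share the same profit, one can assume without loss of generality that copies are picked in order of increasing weight: the equalities $x_{t,i(k)}=1$ in the definition of $P(S,\rho)$ are replaced by linearly many equalities that saturate the lightest item $1(k)$ at $d_{1(k)}$, then $2(k)$ at $d_{2(k)}$, and so on, until $\rho(t,k)$ copies are placed. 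The remaining constraints of $P(S,\rho)$ are inherited, with $0\leq x_t\leq 1$ replaced by $0\leq x_t\leq d$; entries with $d_i=\infty$ are simply left unbounded above, since the capacity constraint $w^T x_t\leq b_t$ and $w_i>0$ implicitly cap $x_{t,i}$ by $b_T/w_i$. Since only $O(\frac{1}{\varepsilon})$ values per grid entry are enumerated, the number of disjuncts and the overall size of the LP relaxation remain $nT^{h(\varepsilon)}$, exactly as in Theorem \ref{thr:disjunction-iik}.

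Third, I would adapt Algorithm \ref{algo:x-opt} by replacing the greedy step $\bar x_{t,i}\gets \bar x_{t,i}+\min\{1-\bar x_{t,i},(W_t-w^T\bar x_t)/w_i\}$ with the analogous step in which $1-\bar x_{t,i}$ is replaced by $d_i-\bar x_{t,i}$. Claim \ref{cl:x-opt} carries over via the same lexicographic exchange argument, and the crucial combinatorial statements -- Claim \ref{cl:who-many-fract} and Claim \ref{cl:bar-x-fract} -- hold verbatim, because their proofs only use that the greedy processes items in a fixed $p/w$ order and never ``undoes'' a saturated variable, which is unchanged when the upper bound $1$ is replaced by $d_i$. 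Rounding down any fractional value still costs at most the profit of one \emph{unit} of that item (the fractional part lies in $[0,1)$), while in every ``large'' class the disjunct matching the optimum guarantees at least $\ceil{\frac{1}{\varepsilon}}+1$ units of profit $p^k$; the telescoping argument in the proof of Theorem \ref{thr:rounding} then yields a $(1-2\varepsilon)$-approximation, giving the PTAS. The main obstacle I anticipate is the bookkeeping needed to check that the combinatorial arguments of Section \ref{sec:round} really depend only on the greedy order and the monotonicity of $\bar x_t$ in $t$, rather than on the specific upper bound $1$; once this is verified, the analysis transfers without further work.
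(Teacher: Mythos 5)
Your proposal matches the paper's proof essentially step for step: the paper likewise redefines $\rho(t,k)$ as an aggregate copy count per class, replaces the constraints of $P(S,\rho)$ by $0\leq x_t\leq d$ together with equalities that saturate the lightest item types first (with one partially-filled boundary type) and zero out the rest when $\rho(t,k)\leq \lceil \frac{1}{\varepsilon}\rceil$, and runs the same modified greedy with $d_i-\bar x_{t,i}$ in place of $1-\bar x_{t,i}$, rounding down at the end. The paper even omits the transferred analysis of Section \ref{sec:round} with the same justification you give, so your additional verification sketch is, if anything, slightly more detailed than the published argument.
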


\begin{proof}

We detail the changes to be implemented to the algorithm and omit the analysis, since it closes follows that for \iik. Modify the definition of $P(S, \rho)$ as follows. Fix $h \in [|S|]$, $ k \in {\cal K}_h$ and $t \in {\cal T}_h$. As before, items in the $k$-th bucket are ordered monotonically increasing according to their weight as $I_k=\{1(k),\dots , |I_k|(k) \}$. In order to take into account item multiplicities we define $r:= r(t,k) = \max\{\bar{r}: \ \sum_{l=1}^{\bar{r}} d_{l(k)} < \rho({t,k}) \}.$ 
Replace the third, fitfth and sixth set of constraints from $P(S,\rho)$ with the following, respectively: 
\vspace{-1mm}
\begin{itemize}
	\setlength{\itemsep}{0pt}
	\item[(4')] $0 \leq x_t \leq d$;
	\item[(5')] $x_{t,i(k)}=d_{i(k)}, \ \forall i(k): i \leq r(t,k)$;
	$x_{t,(r+1)(k)}=\rho(t,k)-\sum_{l=1}^{r} d_{l(k)}$;
	\item[(6')] $x_{t,i(r+2)}=0, \dots , x_{t,i(|I_k|)}=0$ if $\rho_{t,k}\leq \frac{1}{\varepsilon}$.
\end{itemize}
For fixed $S, \rho$, call all items $i$ such that $x_{t,i}=c$ appears in $(5')$ or in $(6')$ \emph{$(t,c)$-fixed}. Note that items that are $(t,0)$-fixed correspond to items that were called $t$-excluded in \iik. Items that are $(t,c)$-fixed for some $c$ are called \emph{$t$-fixed}. Let $\bar x$ be the output of the modification of Algorithm \ref{algo:x-opt} given below.
	Again, vector $\lfloor \bar x \rfloor$ gives the required $(1-2\epsilon)$-approximated integer solution.
\begin{algorithm}[h!t]
	\caption{}
	\label{algo:x-opt2}
	\begin{algorithmic} [1]
		\medskip
		\STATE{Set $\bar{x}_{0}=\boldsymbol{0}$.}
		\medskip
		\STATE{For $t=1,\dots,T$: 
			\vspace{-3mm}
			\begin{enumerate}[(a)]
				\setlength{\itemsep}{0pt}
				\item Set $\bar{x}_t=\bar{x}_{t-1}$.
				\item\label{step:add-included2} For $i \in [n]$, if $i$ is $(t,c)$-fixed for some $c$, set $\bar x_{t,i}=c$. 
				\item While {$W_t - w^T\bar{x}_t>0$:} 
				\vspace{-3mm}
				\begin{enumerate}[(i)]
					\setlength{\itemsep}{0pt}
					\item Select the smallest $i \in [n]$ such that $i$ is not $t$-fixed and $\bar{x}_{t,i}<d_i$.
					\item Set $\bar{x}_{t,i}=\bar{x}_{t,i} + \min \{d_i-\bar{x}_{t,i}, \frac{W_t - w^T\bar{x}_t }{w_i} \}$. 
				\end{enumerate}
			\end{enumerate}
			\vspace{-5mm}}
		\end{algorithmic}
	\end{algorithm}
\end{proof}
\newpage

\noindent {\bf Acknowledgements.} We thank Andrey Kupavskii for valuable combinatorial insights on the topic. \ Yuri Faenza's research was partially supported by the SNSF \emph{Ambizione} fellowship PZ00P2$\_$154779 \emph{Tight formulations of $0/1$ problems}. Some of the work was done when Igor Malinovi\'c visited Columbia University partially funded by a gift from the SNSF.

\bibliographystyle{plain}
\bibliography{citation}


\appendix

\section*{Appendix}

\section{Notation}

We refer to \cite{Schrijver03} for basic definitions and facts on approximation algorithms and polytopes. Given an integer $k$, we write $[k]:=\{1,\dots,k\}$ and $[k]_0:=[k]\cup \{0\}$. Given a polyhedron $Q\subseteq \R^n$, a \emph{relaxation} $P\subseteq \R^n$ is a polyhedron such that $Q\subseteq P$ and the integer points in $P$ and $Q$ coincide. The \emph{size} of a polyhedron is the minimum number of facets in an extended formulation for it, which is well-known to coincide with the minimum number of inequalities in any linear description of the extended formulation. 

\section{Background on disjunctive programming}\label{app:disj-progr}

Introduced by Balas \cite{Balas} in the 70s, it is based on ``covering'' the set by a small number of pieces which admit an easy linear description. More formally, given a  set $Q \subseteq \mathbb{Z}^n$ we first find a collection $\{Q_j \}_{j \in [m]}$ such that $Q=\cup_{j \in [m]}Q_j$.
If there exist polyhedra $P_j, j \in [m]$ with bounded integrality gap and $P_j \cap \mathbb{Z}^n = Q_j$, then $P:=\conv{\cup_{j \in [m]}P_j}$ is a relaxation of $\conv Q$ of with the same guarantee on the integrality gap. Moreover, one can describe $P$ with (roughly) as many inequalities as the sum of the inequalities needed to describe the $P_j$. 
A variety of benchmarks of mixed integer linear programs (MILPs) have shown the improved performances of branch-and-cut algorithms by efficiently generated disjunctive cuts \cite{Balas09}.  Branch-and-bound algorithms for solving MILP also implicitly use disjunctive programming. The branching strategy based on thin directions that come from the Lenstra's algorithm for integer programming in fixed dimension has shown good results in practice for decomposable knapsack problems \cite{Krishnamoorthy09}. For further applications of disjunctive cuts in both linear and non-linear mixed integer settings see \cite{Belotti11}.

\section{IIK, MKP, and UFP}\label{app:ufp}
 A special case of GAP where profits and weights of items do not change over the set of bins is called the multiple knapsack problem (MKP). MKP is strongly NP-complete as well as \iik \ and has an LP-based efficient PTAS (EPTAS) \cite{Jansen12}. Both the scheme in \cite{Jansen12} and the one we present here are based on reducing the number of possible profit classes and knapsack capacities, and then guessing the most profitable items in each class. However, the way this is performed is very different. The key ingredient of the approximation schemes so far developed for MKP is a ``shifting trick''. In rounding a fractional LP solution it redistributes and mixes together items from different buckets. Applying this technique to \iik \ would easily violate the monotonicity constraint, i.e. $x_{t, i} \leq x_{t+1, i} $ where $x_{t,i}$ indicates whether an item $i$ is present in the knapsack at time $t$. This highlights a significant difference between the problems: the ordering of the bins is irrelevant for MKP while it is crucial for \iik.

In \ufp \ one is given a path $P=(V,E)$ with edge capacities $b: E \rightarrow\R_{>0}$ and a set of tasks (i.e. sub-paths) $[n]$ with profits $p: [n] \rightarrow \R_{>0}$ and weights $w: [n] \rightarrow \R_{>0}$ and, for each task $\pi \in [n]$, its starting point and ending nodes $u(\pi), v(\pi) \in V$. The goal is to select a set $S \subseteq [n]$ of maximum profit such that, for each $e \in E$, the set of tasks in $S$ containing $e$ has total weight at most $b_e$. One might like to rephrase \iik \ in this framework mapping times to nodes, parameters $b_t$ to edge capacities, and the insertion of item $i$ at time $t$ with an appropriate path $\pi(t,i)$. However, we would need to introduce another set of constraints that for each item $i$ at most one task $\pi=(i,t)$ is taken. This would be a more restrictive setting then \ufp. The best known approximation for \ufp \ is $2 + \epsilon$ \cite{Anagnos14}. When all tasks share a common edge, there is a PTAS \cite{Grandoni17} based on a ``sparsification'' lemma introduced in \cite{Batra15} which, roughly speaking, considers guessing $1/ \epsilon$ ``locally large'' tasks in the optimal solution for each $e \in E$ and by this making the computation of ``locally small'' tasks easier. In our approach for solving \iik \ we perform a kind of sparsification in Section \ref{sec:reduction} by reducing the number of times and different profits to be taken into consideration. At that point, the number of possible time/profit combinations is still too large to be able to guess a constant fraction of the highest profit items per each time. Thus, we introduce an additional pattern enumeration in Section \ref{sec:iikdisj} which follows the evolution of the highest-profit item in an optimal solution to an \iik \ instance. This pattern, -- that we call ''stairway``, see Section \ref{sec:iikdisj} -- is specific for \iik, 
and fundamental for describing its dynamic nature (while 
the set of edges for \ufp \ is fixed). Once the stairway is fixed we can identify and distinguish between locally large and small items. This is the main  difference between our approach here and the techniques used for \ufp \ and related problems \cite{Anagnos14,Batra15,Grandoni17}, or the techniques used in other works on \iik \ \cite{Bienstock13,Ye16}.

\end{document}